% TODO list
% 2. 歴史をちゃんと調べたいが，似たようなパズルがたくさんあってオリジナルがわからん
% 2022/01/15にIEC Global Pty Ltd(iecglobal.au@gmail.com)に問い合わせたが反応なし

\documentclass[a4paper,UKenglish,cleveref, autoref, thm-restate,authorcolumns]{lipics-v2021}
\usepackage{tikz}
%This is a template for producing LIPIcs articles. 
%See lipics-v2021-authors-guidelines.pdf for further information.
%for A4 paper format use option "a4paper", for US-letter use option "letterpaper"
%for british hyphenation rules use option "UKenglish", for american hyphenation rules use option "USenglish"
%for section-numbered lemmas etc., use "numberwithinsect"
%for enabling cleveref support, use "cleveref"
%for enabling autoref support, use "autoref"
%for anonymousing the authors (e.g. for double-blind review), add "anonymous"
%for enabling thm-restate support, use "thm-restate"
%for enabling a two-column layout for the author/affilation part (only applicable for > 6 authors), use "authorcolumns"
%for producing a PDF according the PDF/A standard, add "pdfa"

\pdfoutput=1 %uncomment to ensure pdflatex processing (mandatatory e.g. to submit to arXiv)
\hideLIPIcs  %uncomment to remove references to LIPIcs series (logo, DOI, ...), e.g. when preparing a pre-final version to be uploaded to arXiv or another public repository

%\graphicspath{{./graphics/}}%helpful if your graphic files are in another directory

\bibliographystyle{plainurl}% the mandatory bibstyle

\title{Sorting Balls and Water: Equivalence and Computational Complexity} %TODO Please add

%\titlerunning{Dummy short title} %TODO optional, please use if title is longer than one line

% \author{Jane {Open Access}}{Dummy University Computing Laboratory, [optional: Address], Country \and My second affiliation, Country \and \url{http://www.myhomepage.edu} }{johnqpublic@dummyuni.org}{https://orcid.org/0000-0002-1825-0097}{(Optional) author-specific funding acknowledgements}%TODO mandatory, please use full name; only 1 author per \author macro; first two parameters are mandatory, other parameters can be empty. Please provide at least the name of the affiliation and the country. The full address is optional. Use additional curly braces to indicate the correct name splitting when the last name consists of multiple name parts.

% \author{Joan R. Public\footnote{Optional footnote, e.g. to mark corresponding author}}{Department of Informatics, Dummy College, [optional: Address], Country}{joanrpublic@dummycollege.org}{[orcid]}{[funding]}

\author{Takehiro Ito}{Graduate School of Information Sciences, Tohoku University, Japan}{takehiro@tohoku.ac.jp}{https://orcid.org/0000-0002-9912-6898}{Partially supported by JSPS KAKENHI Grant Numbers JP19K11814 and JP20H05793, Japan.}%JP18H04091, 
\author{Jun Kawahara}{Kyoto University, Japan.}{jkawahara@i.kyoto-u.ac.jp}{}{}
\author{Shin-ichi Minato}{Kyoto University, Japan \and \url{https://www.lab2.kuis.kyoto-u.ac.jp/minato/}}{minato@i.kyoto-u.ac.jp}{https://orcid.org/0000-0002-1397-1020}{}
\author{Yota Otachi}{Nagoya University, Japan \and \url{https://www.math.mi.i.nagoya-u.ac.jp/~otachi/} }{otachi@nagoya-u.jp}{https://orcid.org/0000-0002-0087-853X}
{JSPS KAKENHI Grant Numbers
%  JP18H04091, % Uehara A
  JP18K11168, % Otachi C, old
  JP18K11169, % Kiyomi C
  JP20H05793, % Ito Henkaku B
  JP21K11752. % Otachi C, new
}
\author{Toshiki Saitoh}{Kyushu Institute of Technology, Japan.}{toshikis@ai.kyutech.ac.jp}{}{}
\author{Akira Suzuki}{Graduate School of Information Sciences, Tohoku University, Japan \and \url{http://www.ecei.tohoku.ac.jp/alg/suzuki/}}{akira@tohoku.ac.jp}{https://orcid.org/0000-0002-5212-0202}{Partially supported by JSPS KAKENHI Grant Numbers JP20K11666 and JP20H05794, Japan.}%JP18H04091, 
\author{Ryuhei Uehara}{Japan Advanced Institute of Science and Technology, Japan \and \url{http://www.jaist.ac.jp/~uehara}}{uehara@jaist.ac.jp}{https://orcid.org/0000-0003-0895-3765}{JSPS KAKENHI Grant Numbers
% JP18H04091, % Uehara A
 JP20H05961, % Gakuhen (A) soukatsu
 JP20H05964, % Gakuhen (A) B01
 JP20K11673. % Asano Kiban C
}
\author{Takeaki Uno}{National Institute of Informatics, Japan.}{uno@nii.jp}{}{}
\author{Katsuhisa Yamanaka}{Iwate University, Japan \and \url{http://www.kono.cis.iwate-u.ac.jp/~yamanaka/}}{yamanaka@iwate-u.ac.jp}{https://orcid.org/0000-0002-4333-8680}{Partially supported by JSPS KAKENHI Grant Numbers 19K11812, Japan.}%JP18H04091 and 
\author{Ryo Yoshinaka}{Graduate School of Information Sciences, Tohoku University, Japan.}{ryoshinaka@tohoku.ac.jp}{https://orcid.org/0000-0002-5175-465X}{}

\authorrunning{T.~Ito et al.} %TODO mandatory. First: Use abbreviated first/middle names. Second (only in severe cases): Use first author plus 'et al.'

\Copyright{Takehiro Ito, Jun Kawahara, Shin-ichi Minato, Yota Otachi,
Toshiki Saitoh, Akira Suzuki, Ryuhei Uehara, Takeaki Uno, Katsuhisa Yamanaka, and Ryo Yoshinaka
} %TODO mandatory, please use full first names. LIPIcs license is "CC-BY";  http://creativecommons.org/licenses/by/3.0/

\ccsdesc[100]{Mathematics of computing $\rightarrow$
Discrete mathematics $\rightarrow$
Combinatorics $\rightarrow$
Combinatorial algorithms} %TODO mandatory: Please choose ACM 2012 classifications from https://dl.acm.org/ccs/ccs_flat.cfm 

\keywords{Ball sort puzzle, 
recreational mathematics,
sorting pairs in bins, 
water sort puzzle} %TODO mandatory; please add comma-separated list of keywords

\category{} %optional, e.g. invited paper

\relatedversion{} %optional, e.g. full version hosted on arXiv, HAL, or other respository/website
%\relatedversiondetails[linktext={opt. text shown instead of the URL}, cite=DBLP:books/mk/GrayR93]{Classification (e.g. Full Version, Extended Version, Previous Version}{URL to related version} %linktext and cite are optional

%\supplement{}%optional, e.g. related research data, source code, ... hosted on a repository like zenodo, figshare, GitHub, ...
%\supplementdetails[linktext={opt. text shown instead of the URL}, cite=DBLP:books/mk/GrayR93, subcategory={Description, Subcategory}, swhid={Software Heritage Identifier}]{General Classification (e.g. Software, Dataset, Model, ...)}{URL to related version} %linktext, cite, and subcategory are optional

%\funding{(Optional) general funding statement \dots}%optional, to capture a funding statement, which applies to all authors. Please enter author specific funding statements as fifth argument of the \author macro.
\funding{This research is partially supported by JSPS Kakenhi Grant Number 18H04091. % KibanA
}%optional

%\acknowledgements{This research is partially supported by MEXT/JSPS Kakenhi Grant Number
%18H04091. % KibanA
%}%optional

\nolinenumbers %uncomment to disable line numbering

%Editor-only macros:: begin (do not touch as author)%%%%%%%%%%%%%%%%%%%%%%%%%%%%%%%%%%
\EventEditors{John Q. Open and Joan R. Access}
\EventNoEds{2}
\EventLongTitle{42nd Conference on Very Important Topics (CVIT 2016)}
\EventShortTitle{CVIT 2016}
\EventAcronym{CVIT}
\EventYear{2016}
\EventDate{December 24--27, 2016}
\EventLocation{Little Whinging, United Kingdom}
\EventLogo{}
\SeriesVolume{42}
\ArticleNo{23}
%%%%%%%%%%%%%%%%%%%%%%%%%%%%%%%%%%%%%%%%%%%%%%%%%%%%%%

%---- local definitions here ----

\newcommand{\emptyseq}{\varepsilon}
\newcommand{\mcal}[1]{\mathcal{#1}}
\newcommand{\ballmove}{\rightarrow}
\newcommand{\watermove}{\Rightarrow}
\newcommand{\tbmove}[1]{\stackrel{#1}{\Rrightarrow}}
\newcommand{\lrceil}[1]{\left\lceil #1 \right\rceil}
\newcommand{\topcolor}[1]{\mathrm{TC}(#1)}
\newcommand{\topborder}[1]{\mathrm{TB}_{#1}}

\newcommand{\pname}{\mathsf}
\newcommand{\BSP}{\pname{BSP}}
\newcommand{\WSP}{\pname{WSP}}

%\pdfpagesattr{/CropBox [60 80 535 790]} %otachi TODO 消す

\usepackage{color}
\definecolor{darkred}{rgb}{0.75,0,0}
\definecolor{darkgreen}{rgb}{0,0,1}

\newcommand\msize[1]{\left|#1\right|}
\newcommand{\ceil}[1]{\lceil #1 \rceil}
\newcommand{\floor}[1]{\lfloor #1 \rfloor}

\newcommand{\ptitle}[1]{\medskip\noindent\hspace*{2mm}\textsc{\underline{#1}}} %Problem Title
\newenvironment{listing}[1]{%
        \begin{list}{*}{%
                 \settowidth{\labelwidth}{#1}%
                 \setlength{\leftmargin}{\labelwidth}%
                 \advance \leftmargin by 12pt
                   \setlength{\itemsep}{0pt}%
                   \setlength{\parsep}{0pt}%
                   \setlength{\topsep}{0pt}%
                   \setlength{\parskip}{0pt}%
}%
}{%
\end{list}}

\begin{document}

\maketitle

\begin{abstract}
Various forms of sorting problems have been studied over the years.
Recently, two kinds of sorting puzzle apps are popularized.
In these puzzles, we are given a set of bins filled with colored units, balls or water, and some empty bins.
These puzzles allow us to move colored units from a bin to another 
when the colors involved match in some way or the target bin is empty.
The goal of these puzzles is to sort all the color units in order.
We investigate computational complexities of these puzzles.
We first show that these two puzzles are essentially the same from the viewpoint of solvability.
That is, an instance is sortable by ball-moves if and only if it is sortable by water-moves.
We also show that every yes-instance has a solution of polynomial length, which implies that
these puzzles belong to NP\@.
We then show that these puzzles are NP-complete.
For some special cases, we give polynomial-time algorithms.
We finally consider the number of empty bins sufficient for making all instances solvable
and give non-trivial upper and lower bounds in terms of the number of filled bins and the capacity of bins.
\end{abstract}

\section{Introduction}
\label{sec:intro}

%\begin{figure}\centering
% \begin{minipage}[b]{0.4\textwidth}\centering
% \includegraphics[width=0.7\linewidth]{}
% %\includegraphics[width=\linewidth]{}
% \end{minipage}
% \begin{minipage}[b]{0.4\textwidth}\centering
% \includegraphics[width=0.7\linewidth]{}
% %\includegraphics[width=\linewidth]{}
% \end{minipage}
%\includegraphics[width=\linewidth]{}
%\includegraphics[width=\linewidth]{figure/GooglePlay.png}
%\caption{Several variants of the ball/water sort puzzles.\footnotemark}
%\label{fig:b-and-w}
%\end{figure}
%\footnotetext{\figurename~\ref{fig:b-and-w} is a screenshot of Google Play with search words ``sort puzzle''.
%These sort puzzles are released by several different developers.
%As far as the authors checked, it seems that the first one is released by IEC Global Pty Ltd in January, 2020.}

The \emph{ball sort puzzle} and the \emph{water sort puzzle} 
are popularized recently via smartphone apps.\footnote{These sort puzzles are released by 
several different developers. As far as the authors checked,
it seems that the first one is released by IEC Global Pty Ltd in January, 2020.} % (\figurename~\ref{fig:b-and-w}).
Both puzzles involve bins filled with some colored units (balls or water),
and the goal is to somehow sort them.
The most significant feature of these puzzles is that each bin works as a stack. 
That is, the items in a bin have to follow the ``last-in first-out'' (LIFO) rule.

\begin{figure}\centering
\includegraphics[width=0.9\linewidth]{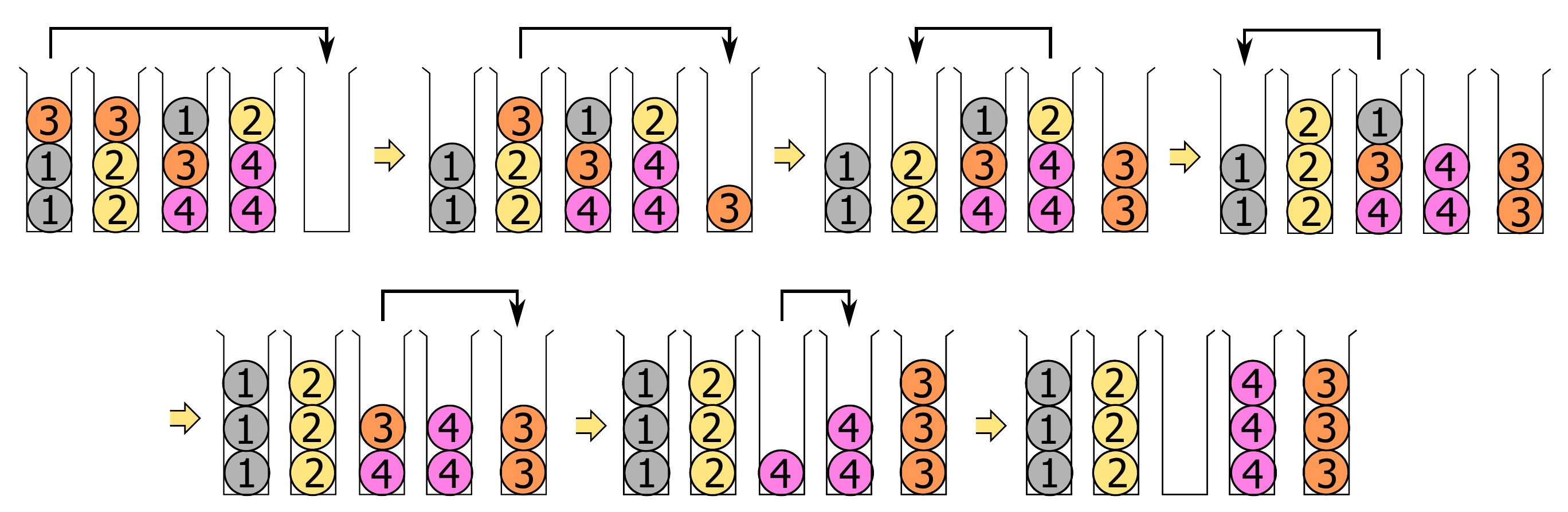}
\caption{An example of the ball sort puzzle.} 
\label{fig:ball1}
\end{figure}

In the ball sort puzzle, we are given $h n$ colored balls in $n$ bins of capacity $h$ and $k$ additional empty bins.
For a given (unsorted) initial configuration, the goal of this puzzle is to arrange the balls in order; 
that is, to make each bin either empty or full with balls of the same color.
(The ordering of bins does not matter in this puzzle.)
The rule of this puzzle is simple: 
(0) Each bin works like a stack, that is, we can pick up the top ball in the bin.
(1) We can move the top ball of a bin to the top of another bin if the second bin is empty or 
it is not full and its top ball before the move and the moved ball have the same color.
An example with $h = 3$, $n = 4$, and $k=1$ is given in \figurename~\ref{fig:ball1}.

\begin{figure}\centering
\includegraphics[width=0.9\linewidth]{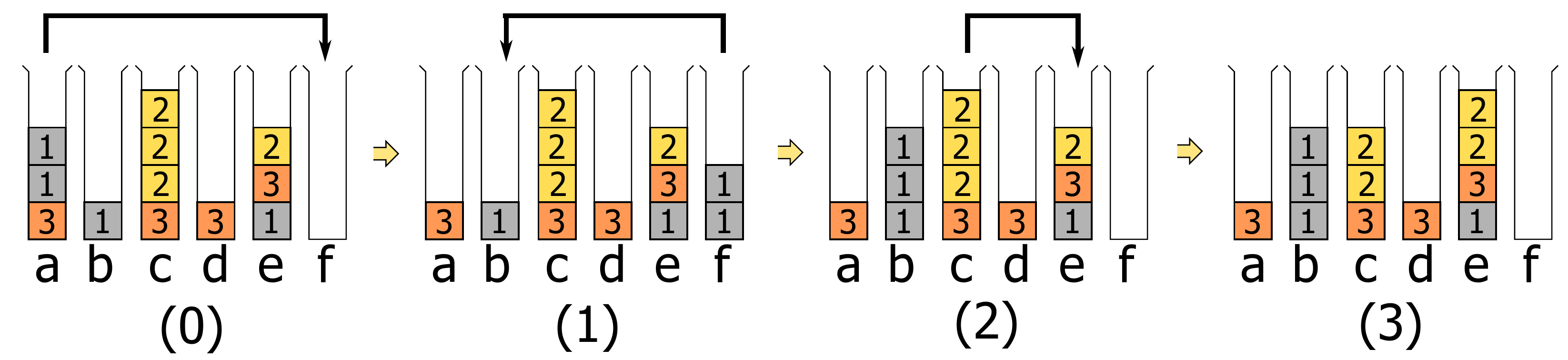}
\caption{Rules of the water sort puzzle.
From the initial configuration (0), we obtain
configuration (1) by moving two units of water of label $1$ from \texttt{a} to \texttt{f},
configuration (2) by moving two units of water of label $1$ from \texttt{a} to \texttt{b}, and
configuration (3) by moving one unit of water of label $2$ from \texttt{c} to \texttt{e}.} 
\label{fig:water1}
\end{figure}

The water sort puzzle is similar to the ball sort puzzle.
Each ball is replaced by colored water of a unit volume in the water sort puzzle.
In the water sort puzzle, the rules (0) and (1) are the same as the ball sort puzzle except one liquid property: 
Colored water units are merged when they have the same color and they are consecutive in a bin.
When we pick up a source bin and move the top water unit(s) to a target bin, 
the quantity of the colored water on the top of the bin to be moved varies according to the following conditions (\figurename~\ref{fig:water1}).
If the target bin has enough margin, all the water of the same color moves to the target bin.
On the other hand, a part of the water of the same color moves up to the limit of the target bin 
if the target bin does not have enough margin.

% \begin{figure}
% \caption{Reversible case of Water Sort Puzzle.} 
% \label{fig:water2}
% \end{figure}

In this paper, we investigate computational complexities of the ball and water sort puzzles.
We are given $n+k$ bins of capacity $h$.
In an initial configuration, $n$ bins are full (i.e., filled with $h$ units) and $k$ bins are empty,
where each unit has a color in the color set $C$.
Our task is to fill $n$ bins monochromatically, that is, we need to fill each of them with $h$ units of the same color.
We define $\BSP$ and $\WSP$ as the problems of deciding whether a given initial configuration
can be reconfigured to a sorted configuration by a sequence of ball moves and water moves, respectively.
We assume that instances are encoded in a reasonable way, which takes $\Theta(nh \log |C| + \log k)$ bits.
Without loss of generality, we assume that each color $c \in C$ occurs exactly $hj$ times for some positive integer $j$ in any instance
since otherwise the instance is a trivial no-instance. (This implies that $|C|$ is at most $n$.)

\subsection{Our results}

We first prove that the problems $\BSP$ and $\WSP$ are actually equivalent.
By their definitions, a yes-instance of $\WSP$ is a yes-instance of $\BSP$ as well.
Thus our technical contribution here is to prove the opposite direction.
As a byproduct, we show that a yes-instance of the problems 
admits a reconfiguration sequence of length polynomial in $n+h$ as a yes-certificate.
This implies that the problems belong to NP\@.
We also show that $\BSP$ and $\WSP$ are solvable in time $O(h^{n})$.

We next show that $\BSP$ and $\WSP$ are indeed NP-complete.
By slightly modifying this proof, we also show that even for some kind of \emph{trivial} yes-instances of $\WSP$,
it is NP-complete to find a shortest reconfiguration to a sorted configuration.

We show that if the capacity $h$ is $2$ and the number of colors is $n$,
then $\BSP$ and $\WSP$ are polynomial-time solvable.
In this case, we present an algorithm that finds shortest reconfiguration sequences for yes-instances.

We also consider the following question:
how many empty bins do we need to ensure that all initial configurations can reach a sorted configuration?
Observe that $\BSP$ and $\WSP$ are trivial if $k \ge hn$.
It is not difficult to see that $k \ge n$ is also sufficient by using an idea based on bucket sort.
By improving this idea, we show that $k \ge \ceil{\frac{h-1}{h}n}$ is sufficient for all instances.
We also show that some instances need $k \ge \floor{\frac{19}{64}\min\{n,h\}}$ empty bins.

\subsection{Related studies}

Various forms of sorting problems have been studied over the years.
For example, sorting by reversals is well investigated in the contexts of sorting network \cite[Sect.\ 5.2.2]{Knu98}, 
pancake sort \cite{GatesPapadimitriou1979}, and ladder lotteries \cite{YamanakaNakanoMatsuiUeharaNakada2010}.
There is another extension of sorting problem from the viewpoint of recreational mathematics defined as follows.
For each $i \in \{1,\dots,n\}$, there are $h$ balls labeled $i$.
These $hn$ balls are given in $n$ bins in which each bin is of capacity $h$.
In one step, we are allowed to swap any pair of balls.
Then how many swaps do we need to sort the balls?
This sorting problem was first posed by Peter Winker for the case $h=2$ in 2004~\cite{Winkler2004},
and an almost optimal algorithm for that case was given by Ito et al.~in 2012~\cite{ItoTeruyamaYoshida2012}.

The ball/water sorting puzzles are interesting also from the viewpoints of not only just puzzles but also combinatorial reconfiguration.
Recently, the \emph{reconfiguration problems} attracted the attention in theoretical computer science (see, e.g., \cite{IDHPSUU}).
These problems arise when we need to find a step-by-step transformation between two feasible solutions of 
a problem such that all intermediate results are also feasible and each step abides by a fixed reconfiguration rule, 
that is, an adjacency relation defined on feasible solutions of the original problem.
In this sense, the ball/water sort puzzles can be seen as typical implementations of the framework of reconfiguration problems, while their reconfiguration rules are non-standard.
In most reconfiguration problems, representative reconfiguration rules are \emph{reversible}; 
that is, if a feasible solution $A$ can be reconfigured to another feasible solution $B$,
then $B$ can be reconfigured to $A$ as well (see e.g., the puzzles in \cite{HearnDemaine2009}).
In the ball sort puzzle, we can reverse the last move
if we have two or more balls of the same color at the top of the source bin, or 
there is only one ball in the source bin. 
Otherwise, we cannot reverse the last move.
That is, some moves are reversible while some moves are one-way in these puzzles.
(In \figurename~\ref{fig:ball1}, only the last move is reversible.)
In the water sort puzzle, basically, 
we cannot separate units of water of the same color once we merge them,
however, there are some exceptional cases.
An example is given in \figurename~\ref{fig:water1};
the move from configuration (2) to configuration (3) is reversible, 
but the other moves are not.

%\section{Preliminaries}
%\label{sec:pre}
%\input{pre}

\section{Equivalence of balls and water}
\label{sec:equiv}

\newcommand{\binpic}[5]{%
	\tikzset{R/.style={fill=red!20}}
	\tikzset{G/.style={fill=green!20}}
	\tikzset{B/.style={fill=blue!20}}
	\tikzset{K/.style={fill=black!20}}
	\tikzset{O/.style={fill=none}}
	\begin{scope}[xshift=#1cm]
	\draw[thick] (-0.1,#5+0.3) -- (0,#5+0.2) -- (0,#5) -- (0,0) -- (1,0) -- (1,#5) -- (1,#5+0.2) -- (1.1,#5+0.3); 
	\foreach \dy/\cl [%
 		remember = \newy (initially 0),
 		evaluate = \dy as \oldy using {\newy},
 		evaluate = \dy as \newy using {\oldy+\dy},
 		count = \k,
 		] in {#3} {
 		\filldraw[\cl] (0,\newy) -- (0,\oldy) -- (1,\oldy) -- (1,\newy) -- cycle;
	\foreach \i in {\oldy,...,\newy}
		[\draw[dotted] (0,\i) -- (1,\i);]
 	}
	\draw[ultra thick] (0,#4) -- (1,#4);
	\end{scope}
}

This section presents fundamental properties of $\BSP$ and $\WSP$, from which we will conclude that
\begin{itemize}
\item a configuration is a yes-instance of $\BSP$ if and only if it is a yes-instance of $\WSP$, and
\item $\BSP$ and $\WSP$ both belong to $\mathrm{NP}$.
\end{itemize}

\subsection{Notation used in this section}

Let $B$ and $C$ be finite sets of \emph{bins} and \emph{colors}, respectively.
The set of sequences of colors of length at most $h$ is denoted by $C^{\le h}$.
A sequence of $c \in C$ of length $l$ is called \emph{monochrome} and denoted by $c^l$.
The empty sequence is denoted as $\emptyseq$, which is, in our terminology, also called monochrome.
A \emph{configuration} is a map $S \colon B \to C^{\le h}$.
An instance of $\BSP$ and $\WSP$ is a configuration $S$ such that $|S(b)| \in \{h,0\}$ for all $b \in B$.
A bin $b$ is \emph{sorted} under $S$ if $S(b) \in \{\emptyseq, c^h\}$ for some $c \in C$. 
A configuration $S$ is \emph{sorted} or \emph{goal} if all bins are sorted under $S$.
Sometimes we identify $b \in B$ with $S(b)$ when $S$ is clear from the context.

The $i$th element of a sequence $\alpha$ is denoted by $\alpha[i]$ for $1 \le i \le |\alpha|$.
The \emph{top color} of a color sequence $\alpha$ is $\topcolor{\alpha} = \alpha[|\alpha|]$ if $\alpha$ is not empty.
In case $\alpha$ is empty, $\topcolor{\alpha}$ is defined to be $\emptyseq$. 
A \emph{border} in $\alpha$ is a non-negative integer $i$ such that either $1 \le i < |\alpha|$ and $\alpha[{i}] \neq \alpha[{i+1}]$ or $i=0$, and
the set of borders of $\alpha$ is denoted by $\mcal{D}(\alpha)$.
We count non-trivial borders under $S$ as $D(S) = \sum_{b\in B}|\mcal{D}(S(b))\setminus\{0\}|$.
For example, in \figurename~\ref{fig:water1}, the values of $D$ are $4,3,3,3$ in (0), (1), (2), (3), respectively.

We now turn to define a move of the sort puzzles with the terminologies introduced above.
Intuitively, we pick up two bins $b_1$ and $b_2$ from $B$,
and pour the top item(s) from $b_1$ to $b_2$ if $b_2$ is empty or the top item of $b_2$ has the same color.
The quantity of the items are different in the cases of balls and water.
In the ball case, a unit is moved if possible. On the other hand, two or more units of water of the same color
move at once until all units are moved or $b_2$ becomes full. We define the move more precisely below.

A \emph{ball-move} can modify a configuration $S$ into $S'$, denoted as $S \ballmove S'$, if there are $b_1,b_2\in B$ and $c \in C$ such that
\begin{itemize}
  \item $S(b_1)= S'(b_1) \cdot c$, $\topcolor{S(b_2)}\in \{c,\emptyseq\}$, $S'(b_2)= S(b_2) \cdot c$, and 
  \item $S(b)=S'(b)$ for all $b\in B\setminus\{b_1,b_2\}$.
\end{itemize}

A \emph{water-move} can modify $S$ into $S'$, denoted as $S \watermove S'$, if there are $m \ge 1$, $b_1,b_2\in B$
and $c \in C$ such that 
\begin{itemize}
  \item $S(b_1)= S'(b_1)\cdot c^m$, $\topcolor{S(b_2)}\in \{c,\emptyseq\}$, and $S'(b_2)= S(b_2) \cdot c^m$, 
  \item either $\topcolor{S'(b_1)}\neq c$ or $\msize{S'(b_2)}=h$, and
  \item $S(b)=S'(b)$ for all $b\in B\setminus\{b_1,b_2\}$.
\end{itemize}

The reflexive and transitive closures of $\ballmove$ and $\watermove$ are denoted as $\ballmove^*$ and $\watermove^*$, respectively.
Clearly any single water-move can be simulated by some number of ball-moves.
Thus we have ${\watermove^*} \subseteq {\ballmove^*}$.

\subsection{Fundamental Theorems of $\BSP$ and $\WSP$}

Hereafter we fix an arbitrary initial configuration $S_0$.
Some notions introduced below are relative to $S_0$, though it may not be explicit.
The \emph{top-border table} of a configuration $S$ is a map $\topborder{S} \colon B \to \{0,1,\dots,h-1\}$ defined as $\topborder{S}(b) = \max \mcal{D}(S(b))$.
 
The main object of this section is to observe that top-border tables play an essential role in analyzing the solvability of an instance.
Note that $\topborder{S}(b)=0$ if and only if $b$ is monochrome.
Once we have reached a configuration with $\topborder{S}(b)=0$ for all $b \in B$,
we can achieve a goal configuration by trivial moves.
Figure~\ref{fig:topborder} illustrates some notions introduced in this section.

By the definition of moves, one can easily observe the following properties.
\begin{observation}
\label{obs:basic}
	If $S \ballmove^* T$, the following holds for all bins $b \in B$:
	\begin{enumerate}
        \item Moves monotonically remove borders from top to bottom:\label{obs:monotone}
        \[
        \mcal{D}(T(b)) = \{\, i \in \mcal{D}(S(b)) \mid i \le \topborder{T}(b)\,\}\,;
        \]
		\item The contents below the top borders of $T$ do not change:\label{obs:stable}
		\[
		    T(b)[i] = S(b)[i] \text{ for all } i \le \topborder{T}(b)\,;
		\]
        \item If $b$ is not monochrome in $T$, the colors of all the units above the border in $T$ are the color of the unit just above that border in $S$:\label{obs:topcolor}
		\[
		\topcolor{T(b)} = S(b)[\topborder{T}(b) +1]
		\text{ if\/ $\topborder{T}(b) \neq 0 $.}
		\]
	\end{enumerate}
\end{observation}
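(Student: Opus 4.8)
The plan is to prove the three assertions simultaneously by induction on the length $\ell$ of a witnessing sequence $S = S_0 \ballmove S_1 \ballmove \cdots \ballmove S_\ell = T$, since each is relative to the same pair $(S,T)$ and the inductive coupling between them (in particular, using part~2 to establish part~3) is what makes the argument go through. The base case $\ell = 0$ is immediate: when $T = S$ we have $\topborder{T}(b) = \topborder{S}(b) = \max\mcal{D}(S(b))$, so part~1 reduces to the tautology that every border is at most the maximal one, part~2 is trivial, and part~3 holds because the units strictly above the top border of any sequence $\alpha$ occupy positions $\topborder{}(\alpha)+1,\dots,|\alpha|$ that contain no border and hence form a monochrome block, giving $\topcolor{\alpha} = \alpha[\topborder{}(\alpha)+1]$. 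For the inductive step I would assume the three properties for the length-$(\ell-1)$ prefix reaching $S_{\ell-1} = T'$ and analyze the single remaining move $T' \ballmove T$.

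A single ball-move is determined by a source bin $b_1$, a target bin $b_2$, and a color $c$ with $T'(b_1) = T(b_1) \cdot c$ and $T(b_2) = T'(b_2) \cdot c$. For every $b \notin \{b_1,b_2\}$ we have $T(b) = T'(b)$, so the three properties are inherited verbatim from the hypothesis applied to $S \ballmove^* T'$. For the target bin $b_2$, the move condition $\topcolor{T'(b_2)} \in \{c,\emptyseq\}$ guarantees that appending $c$ creates no new border: whether $b_2$ was empty or its exposed top was already $c$, one checks that $\mcal{D}(T(b_2)) = \mcal{D}(T'(b_2))$ and $\topborder{T}(b_2) = \topborder{T'}(b_2)$, whence all three properties descend directly from those for $T'$.

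The only delicate case is the source bin $b_1$, where removing the top unit can expose a new border, and I expect this to be the main (indeed the only nontrivial) point of the proof. I would split according to the size of the top monochrome block of $T'(b_1)$, whose units are all of color $c$ and occupy exactly the positions strictly above $\topborder{T'}(b_1)$. If that block has at least two units, removing one changes neither the border set nor the top border, and all three properties are unchanged. If the block is a single unit, which is precisely the case $\topborder{T'}(b_1) = |T'(b_1)| - 1$, then removing it deletes exactly the maximal border, so $\mcal{D}(T(b_1)) = \mcal{D}(T'(b_1)) \setminus \{\topborder{T'}(b_1)\}$ and the new $\topborder{T}(b_1)$ is the second-largest border of $T'(b_1)$. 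Pushing the hypothesis for $T'$ through this single deletion yields part~1; part~2 follows because $\topborder{T}(b_1) \le \topborder{T'}(b_1)$ and $T(b_1)$ agrees with $T'(b_1)$ on all surviving positions; and part~3 follows because the units of $T(b_1)$ above its new top border again form a monochrome block, whose color part~2 identifies with $S(b_1)[\topborder{T}(b_1)+1]$. The remaining work is the routine border-and-index bookkeeping of this case, together with recording that $0 \in \mcal{D}(\alpha)$ for every $\alpha$ so that the empty-target subcase is absorbed cleanly.
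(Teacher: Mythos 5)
Your proof is correct, and it matches the paper in the only sense possible: the paper offers no proof at all (it asserts the observation follows ``by the definition of moves''), and your induction on the length of the ball-move sequence, with the single-move case analysis --- target bin gains no border because the appended unit matches its top color, source bin loses at most its topmost border --- is precisely the routine formalization of that claim. One harmless slip worth fixing: in your Case B, when the source bin holds a single unit (so $\topborder{T'}(b_1)=0$), removing it does \emph{not} delete the maximal border, since $\mcal{D}(\emptyseq)=\{0\}$; in that subcase $\mcal{D}(T(b_1))=\mcal{D}(T'(b_1))=\{0\}$ and all three properties hold trivially, so the argument goes through unchanged.
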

%For a goal configuration $T$, we have $\sum_{b \in B} |\mcal{D}(T)(b)| = |B|$.
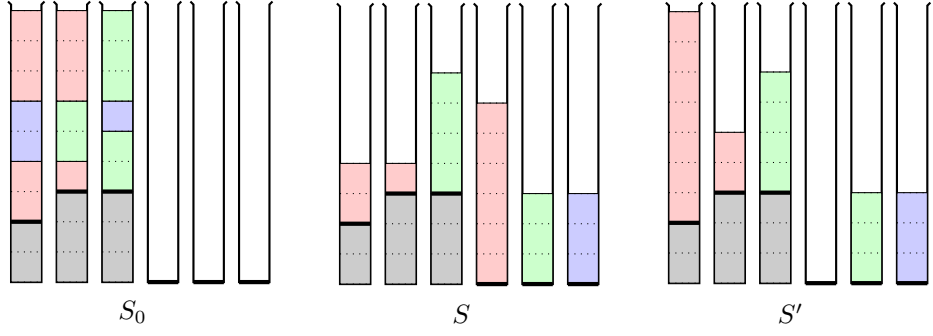
\begin{figure}\centering
	\begin{tikzpicture}[scale=0.4]
		\binpic{0.0}{0}{2/K,2/R,2/B,3/R}{2}{9}
		\binpic{1.5}{0}{3/K,1/R,2/G,3/R}{3}{9}
		\binpic{3.0}{0}{3/K,2/G,1/B,3/G}{3}{9}
		\binpic{4.5}{0}{0/O}{0}{9}
		\binpic{6.0}{0}{0/O}{0}{9}
		\binpic{7.5}{0}{0/O}{0}{9}
		\draw(4,-1) node{$S_0$};
	\end{tikzpicture}
	\quad\quad
	\begin{tikzpicture}[scale=0.4]
		\binpic{0.0}{0}{2/K,2/R}{2}{9}
		\binpic{1.5}{0}{3/K,1/R}{3}{9}
		\binpic{3.0}{0}{3/K,4/G}{3}{9}
		\binpic{4.5}{0}{6/R}{0}{9}
		\binpic{6.0}{0}{3/G}{0}{9}
		\binpic{7.5}{0}{3/B}{0}{9}
		\draw(4,-1) node{$S$};
	\end{tikzpicture}
	\quad\quad
	\begin{tikzpicture}[scale=0.4]
		\binpic{0.0}{0}{2/K,7/R}{2}{9}
		\binpic{1.5}{0}{3/K,2/R}{3}{9}
		\binpic{3.0}{0}{3/K,4/G}{3}{9}
		\binpic{4.5}{0}{0/O}{0}{9}
		\binpic{6.0}{0}{3/G}{0}{9}
		\binpic{7.5}{0}{3/B}{0}{9}
		\draw(4,-1) node{$S'$};
	\end{tikzpicture}
	\caption{\label{fig:topborder} The configuration $S$ can be obtained from the initial configuration $S_0$ and can be modified into the tight one $S'$. The top-borders $\topborder{S}$ of $S$ are shown with thick lines. This figure does not care the units below those borders. All of $S_0$, $S$, and $S'$ have $\mcal{F}_\textrm{red}(\topborder{S})=9$ red units, $\mcal{F}_\textrm{green}(\topborder{S})=7$ green units, and $\mcal{F}_\textrm{blue}(\topborder{S})=3$ blue units above the borders in total. The colors red, green, and blue require $\mcal{M}_{\textrm{red}}(\topborder{S})=0$, $\mcal{M}_{\textrm{green}}(\topborder{S})=1$, and $\mcal{M}_{\textrm{blue}}(\topborder{S})=1$ monochrome bins, respectively, which coincide with the number of the monochrome bins of respective colors in $S'$.}
\end{figure}

We first give a necessary condition for $S_0 \ballmove^* S$.

Recall that throughout the game play, the total number of units of each color does not change, i.e.,
if $S_0 \ballmove^* S$, it holds for every color $c \in C$,
\[
	\sum_{b \in B}|\{\, i \mid 1 \le i \le h \text{ and } S_0(b)[i]=c\,\}|
	= \sum_{b \in B}|\{\, i \mid 1 \le i \le |S(b)| \text{ and } S(b)[i]=c\,\}|\,.
\]
Since the units under the top border of each bin have not been moved (Observation~\ref{obs:basic}-\ref{obs:stable}), 
the total numbers of units of each color $c$ above the borders of $\topborder{S}$ coincide in $S_0$ and $S$.
We count those units of color $c$ as
\[
	\mcal{F}_{c}(\topborder{S}) = \sum_{b \in B} | \{\, i \mid \topborder{S}(b) < i \le h \text{ and } S_0(b)[i] = c \,\} |
\,.\]
Those units may have been moved, but units of color $c$ can go only to empty bins or bins whose top color is $c$ in $S$.
Let us partition the bins into $|C|+1$ groups with respect to $S$, where 
\begin{itemize}
	\item $B_\varepsilon(\topborder{S}) = \{\, b \in B \mid \topborder{S}(b)=0 \,\}$: monochrome bins,
	\item $B_c(\topborder{S}) = \{\, b \in B \setminus B_\varepsilon \mid S_0(b)[\topborder{S}(b)+1]=c\,\}$: non-monochrome bins with the top color $c$.
\end{itemize}
Note that if $b \notin B_\varepsilon(\topborder{S})$, the top color of $b$ in $S$ is $\topcolor{S(b)} = S(b)[\topborder{S}(b)+1] = S_0(b)[\topborder{S}(b)+1]$.
Since each bin $b \in B_c$ may have at most $h - \topborder{S}(b)$ units of color $c$ above the border $\topborder{S}(b)$,
 there are 
 \[
 	\mcal{G}_c(\topborder{S}) = \sum_{b \in B_c} (h-\topborder{S}(b))
 \]
 units of color $c$ can be on the top layer of non-monochrome bins.
% there are $\max\{0,\mcal{F}_c(\topborder{S}) - \sum_{b \in B_c} (h-\topborder{S}(b))\}$ units of color $c$ that should go to some monochrome bins.
Thus, we must have at least
%\[
%	\max\{0,\, \mcal{F}_c(\topborder{S}) - \sum_{b \in B_c} (h-\topborder{S}(b))\}
%\]
%units of color $c$ in some of the monochrome bins.
\[
	\mcal{M}_c(\topborder{S}) = \max\left\{0,\, \lrceil{\frac{\mcal{F}_c(\topborder{S}) - \mcal{G}_c(\topborder{S})}{h}}\right\}
\]
monochrome bins in $B_\varepsilon$ which have some units of color $c$.
Therefore, it is necessary that
\begin{equation}\label{eq:good_configuration}
	\sum_{c \in C} \mcal{M}_c(\topborder{S}) \le | B_\varepsilon(\topborder{S}) |
\,.\end{equation}
We remark that the values of the functions introduced above, namely $\mcal{F}_c$, $B_c$, $\mcal{M}_c$, are all determined by the top-border table $\topborder{S}$. Those values coincide for $S$ and $S'$ if $\topborder{S}=\topborder{S'}$.
Let us say that $S$ is \emph{consistent with} $S_0$ if and only if $\topborder{S}$ satisfies Eq.~(\ref{eq:good_configuration}).
Moreover, $S$ is \emph{tight} if $S$ has exactly $\mcal{M}_c(\topborder{S})$ monochrome bins with at least one unit of every color $c \in C$.
\begin{lemma}\label{lem:tight}
Suppose $S$ is consistent with $S_0$.
There is a tight configuration $S'$ such that $\topborder{S}=\topborder{S'}$ and $S \watermove^* S'$.
\end{lemma}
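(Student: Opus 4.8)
The plan is to construct $S'$ explicitly and reach it from $S$ by water-moves that never disturb the top-border table. By Observation~\ref{obs:basic}-\ref{obs:stable}, as long as we stay within $\ballmove^*$ (hence within $\watermove^*$), the contents of every bin strictly below its top border are frozen and are already determined by $\topborder{S}$ and $S_0$; so the only freedom is how the units lying \emph{above} the borders are distributed, and I will redistribute exactly those. I will use only two kinds of water-moves: (i) pouring from a monochrome bin onto a non-monochrome bin whose top color matches, and (ii) pouring between two monochrome bins of the same color (or into an empty one). Both kinds leave $\topborder{S}$ unchanged: in (i) the target already exposes the matching color, so adding more of that color extends the top run without moving its lowest border, while the monochrome source stays monochrome (border $0$); in (ii) every participating bin is monochrome before and after, so its border stays $0$.

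First phase (offloading). For each color $c$, repeatedly pour color $c$ out of the monochrome bins into the bins of $B_c(\topborder{S})$, filling the latter toward capacity $h$, until either every bin of $B_c(\topborder{S})$ is full or the monochrome bins hold no color $c$. Each such pour is a legal water-move: when the target has room for the whole monochrome source the source empties, so $\topcolor{S'(b_1)}\neq c$; otherwise the target reaches height $h$, so $\msize{S'(b_2)}=h$; in either case the water-move condition is satisfied. Since the bins of $B_c(\topborder{S})$ can hold $\mcal{G}_c(\topborder{S})$ units of color $c$ above their borders in total while there are $\mcal{F}_c(\topborder{S})$ such units altogether, and since the monochrome bins held at least $\mcal{F}_c(\topborder{S})-\mcal{G}_c(\topborder{S})$ units of $c$ to begin with, after this phase the monochrome bins jointly contain exactly $\max\{0,\mcal{F}_c(\topborder{S})-\mcal{G}_c(\topborder{S})\}$ units of color $c$.

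Second phase (packing). Now, for each color $c$ separately, use moves of type (ii) to pour color-$c$ water among monochrome bins, always emptying one bin into another until the latter is full. This packs the $\max\{0,\mcal{F}_c(\topborder{S})-\mcal{G}_c(\topborder{S})\}$ units of color $c$ into $\lrceil{(\mcal{F}_c(\topborder{S})-\mcal{G}_c(\topborder{S}))/h}=\mcal{M}_c(\topborder{S})$ non-empty monochrome bins (and into none when that number is $0$), emptying all other monochrome bins of that color. The total number of non-empty monochrome bins produced is $\sum_{c}\mcal{M}_c(\topborder{S})$, which by consistency (Eq.~(\ref{eq:good_configuration})) is at most $|B_\varepsilon(\topborder{S})|$, so the phase never runs out of monochrome bins. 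The resulting configuration $S'$ has exactly $\mcal{M}_c(\topborder{S})$ monochrome bins of each color $c$, hence is tight; and since every move preserved the table, we have $\topborder{S'}=\topborder{S}$ and $S\watermove^* S'$.

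The delicate point throughout is the interplay between the coarse granularity of water-moves---a single move must either drain the source's top color entirely or fill the target to the brim---and the requirement that the top-border table be preserved. This is exactly why I route all offloaded water onto bins that already expose the correct color (type (i)) and otherwise keep every participating bin monochrome (type (ii)): these are precisely the configurations in which ``move all or fill to $h$'' can neither uncover a lower border nor create a spurious one. Verifying this invariant move-by-move is the only real work; the counting identities for $\mcal{F}_c$, $\mcal{G}_c$, and $\mcal{M}_c$ are then mere bookkeeping. A secondary subtlety, handled by the inequality $\mcal{F}_c(\topborder{S})-\mcal{G}_c(\topborder{S})\le$ (monochrome supply of $c$), is that without the offloading phase the monochrome bins could require strictly more than $\mcal{M}_c(\topborder{S})$ bins, so packing alone would not achieve tightness.
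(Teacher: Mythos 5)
Your proof is correct and follows essentially the same route as the paper's: both redistribute the above-border units via border-preserving water-moves (pouring surplus monochrome bins onto the bins of $B_c(\topborder{S})$ with matching exposed color, and consolidating monochrome bins of the same color) until exactly $\mcal{M}_c(\topborder{S})$ monochrome bins of each color $c$ remain. The paper compresses this into a terse one-bin-at-a-time emptying argument justified by the definition of $\mcal{M}_c$, whereas your two-phase (offload, then pack) version makes explicit the move legality and top-border preservation that the paper leaves implicit.
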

\begin{proof}
	Recall that it is necessary that $S$ has at least $\mcal{M}_c(\topborder{S})$ monochrome bins of each color $c$.
	If $S$ has more than that, one can move all the units of any of the monochrome bins of color $c$ to other non-empty bins.
	The definition of $\mcal{M}_c$ guarantees that we can repeat this until we have just $\mcal{M}_c(\topborder{S})$ monochrome bins of color $c$.
\end{proof}

Of course $\topborder{S}$ being consistent with $S_0$ does not imply $S_0 \ballmove^* S$.
Suppose $S_0 \ballmove^* S \ballmove S'$ with $\topborder{S} \neq \topborder{S'}$, where $S'$ has one less border than $S$ by moving a unit above the top border of some bin $b$ to some other bins.
%That is, 
%\begin{itemize}
%	\item $\topborder{S}(a)=\topborder{T}(a)$ for all $a \in B \setminus \{b\}$,
%	\item $\topborder{S}(b) > 0$ and $\topborder{T}(b) = \max \{\, i \in \mcal{D}(S(b)) \mid 0 \le i < \topborder{S}(b)\,\}$.
%\end{itemize}
Note that $S(b)$ is not monochrome. % since the number of borders is decreased by the move.
%Let $c=S_0(b)[\topborder{S}(b)+1]$ be the top color of the source bin $b$ in $S$.
During the move, the bin $b$ can keep units below its top border only.
Therefore, the number of necessary monochrome bins of each color $c$ for this move is
\begin{equation*}%\label{eq:good_move}
	\mcal{M}_c^{b}(\topborder{S}) = \max\left\{0,\, \lrceil{\frac{\mcal{F}_c(\topborder{S}) - (\mcal{G}_c(\topborder{S}) - (h-\topborder{S}(b)))}{h}}\right\}
%	\mcal{M}_c^{b}(\topborder{S}) = \max\left\{0,\, \lrceil{\frac{\mcal{F}_c(\topborder{S}) - \sum_{a \in B_c(\topborder{S})\setminus\{b\}} (h-\topborder{S}(a))}{h}}\right\}
\end{equation*}
and it is necessary that 
\begin{equation*}%\label{eq:movable_condition}
	\sum_{c \in C} \mcal{M}_c^{b}(\topborder{S}) \le | B_\varepsilon(\topborder{S}) |
\,.\end{equation*}
%Actually this gives a sufficient condition of the move in the following sense.
Those conditions depend on top-border configuration of $S$.
For two top-border tables $\tau$ and $\tau'$ and a bin $b \in B$, we write $\tau \tbmove{b} \tau'$ if
\begin{itemize}
	\item $\tau(a)=\tau'(a)$ for all $a \in B \setminus \{b\}$,
	\item $\tau(b) > 0$ and $\tau'(b) = \max \{\, i \in \mcal{D}(S_0(b)) \mid 0 \le i < \tau(b)\,\}$,
	\item $\sum_{c \in C} \mcal{M}_c^{b}(\tau) \le | B_\varepsilon(\tau) |$.
\end{itemize}
We write $\tau \tbmove{} \tau'$ if $\tau \tbmove{b} \tau'$ for some $b \in B$.
Figure~\ref{fig:topborder_nonmovable} may help intuitive understanding the above argument.

%\begin{lemma}\label{lem:tbmove}
%	If $S \ballmove T$ or $S \watermove T$, then either $\topborder{S}=\topborder{T}$ or $\topborder{S} \tbmove{} \topborder{T}$.
%\end{lemma}
%
%\begin{lemma}\label{lem:movable_condition}
%	Suppose $\tau \tbmove{} \tau'$.
%	Then for any configuration $S$ such that $\topborder{S} = \tau$,
%	there is $S'$ such that $\topborder{S'}=\tau'$ and $S \watermove^* S'$.
%\end{lemma}
%\begin{proof}
%Figure~\ref{fig:topborder_nonmovable} may help intuitive understanding of the lemma.
%We assume without loss of generality that $S$ is tight by Lemma~\ref{lem:tight}.
%The condition $\sum_{c \in C} \mcal{M}_c^{b}(\tau) \le | B_\varepsilon(\tau) |$ implies that units of color $c=S_0(b)[\tau(b)+1]$ in concern can be distributed to bins other than $b$.
%If $\mcal{M}_c^b(\tau) = \mcal{M}_c(\tau)$, one can move the units in $b$ above $\tau'(b)$ to other bins whose top color is $c$.
%If $\mcal{M}_c^b(\tau) > \mcal{M}_c(\tau)$, one can move those units to an empty bin, which must be available in $S$, since $S$ is tight.
%\end{proof}

\begin{theorem}\label{thm:movable_condition}
	Suppose a configuration $S$ is consistent with $S_0$.
	Then, there is a configuration $T$ such that $S \watermove^* T$ if and only if $\topborder{S} \tbmove{}^* \topborder{T}$,
	where $\tbmove{}^*$ is the reflexive and transitive closure of $\tbmove{}$.
\end{theorem}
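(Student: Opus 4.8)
The plan is to read the biconditional as the assertion that the set of top-border tables realizable from $S$ by water-moves, namely $\{\topborder{T} : S \watermove^* T\}$, coincides with the set $\{\tau : \topborder{S} \tbmove{}^* \tau\}$ reachable by the abstract relation $\tbmove{}$, and to prove the two inclusions separately. Throughout I will lean on the fact, stressed just before the statement, that $\mcal{F}_c$, $\mcal{G}_c$, $\mcal{M}_c$ and the partition $B_\varepsilon, B_c$ all depend only on the top-border table, together with Observation~\ref{obs:basic} and Lemma~\ref{lem:tight}.

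For necessity ($S \watermove^* T \Rightarrow \topborder{S} \tbmove{}^* \topborder{T}$) I would decompose the reconfiguration into single water-moves $S = U_0 \watermove U_1 \watermove \cdots \watermove U_m = T$ and argue that each step either fixes the top-border table or realizes a single $\tbmove{b}$-step. Consider one move pouring $c^{m'}$ from $b_1$ to $b_2$. By the defining condition of a water-move, either the target became full, in which case $b_1$ keeps $c$ on top and no border is removed, so $\topborder{U_i} = \topborder{U_{i+1}}$; or $b_1$'s entire top $c$-block was emptied, and then Observation~\ref{obs:basic}-\ref{obs:monotone} shows this removes exactly the topmost border of $b_1$ and changes nothing elsewhere, matching the first two clauses of $\tbmove{b_1}$. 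The remaining clause, $\sum_{c} \mcal{M}_c^{b_1}(\topborder{U_i}) \le |B_\varepsilon(\topborder{U_i})|$, I obtain by counting: in $U_{i+1}$ the $c$-units above the borders still total $\mcal{F}_c$ but can sit only on the tops of the non-monochrome $c$-bins (capacity $\mcal{G}_c - (h - \topborder{U_i}(b_1))$) or in monochrome bins, forcing at least $\mcal{M}_c^{b_1}$ monochrome $c$-bins; since every newly created monochrome $c$-bin comes from a bin that was empty in $U_i$, and each colour $c' \neq c$ already required $\mcal{M}_{c'}$ monochrome bins in $U_i$, summing over all colours yields the inequality.

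For sufficiency ($\topborder{S} \tbmove{}^* \tau \Rightarrow \exists T$ with $S \watermove^* T$ and $\topborder{T} = \tau$) I would induct on the number of $\tbmove{}$-steps. For a single step $\topborder{S} \tbmove{b} \tau'$, I first apply Lemma~\ref{lem:tight} to move by water-moves to a tight configuration with the same top-border table, so that each colour $c$ occupies exactly $\mcal{M}_c$ monochrome bins. I then physically remove $b$'s top block of colour $c$, of height $h - \topborder{S}(b)$, by pouring it through a sequence of water-moves onto the tops of the other non-monochrome $c$-bins, onto the monochrome $c$-bins, and---when these do not suffice---onto empty bins, which thereby become fresh monochrome $c$-bins. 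The crux is a volume computation showing that the total available room is at least $h - \topborder{S}(b)$: the third clause of $\tbmove{b}$, namely $\sum_c \mcal{M}_c^{b} \le |B_\varepsilon|$, is precisely what guarantees enough empty bins, because $\mcal{M}_c^{b} = \mcal{M}_c(\tau')$ is exactly the number of monochrome $c$-bins needed after the move while the other colours retain their counts. The resulting configuration has top-border table $\tau'$, which is again consistent (its required monochrome bins number at most $\sum_c \mcal{M}_c^{b} \le |B_\varepsilon| \le |B_\varepsilon(\tau')|$), so the induction proceeds.

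The main obstacle is the sufficiency step, and specifically reconciling the rigid semantics of water-moves---each pour empties a whole monochrome block up to the capacity of a single target---with the purely volumetric guarantee supplied by the inequality. I must argue that the required targets are always of top colour $c$ or empty, that they can be filled one after another until $b$'s block is exhausted (each intermediate pour satisfying the water-move condition because either a target becomes full or $b$'s top colour changes on the final pour), and that re-tightening via Lemma~\ref{lem:tight} after each step keeps the whole process inside consistent configurations, so that the induction is well-founded.
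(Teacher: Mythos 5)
Your proposal is correct and takes essentially the same route as the paper's own proof: necessity by decomposing into single water-moves and running the counting argument that forces the $\tbmove{}$ condition, and sufficiency by first making the configuration tight via Lemma~\ref{lem:tight} and then using $\sum_{c}\mcal{M}^{b}_{c}(\tau)\le|B_\varepsilon(\tau)|$ to pour the block above $b$'s top border into bins of top color $c$ or, when $\mcal{M}^{b}_{c}>\mcal{M}_{c}$, into an empty bin guaranteed by tightness, iterating over the $\tbmove{}$-steps. You merely spell out details the paper leaves implicit (legality of multi-target pours, preservation of consistency and tightness along the induction), which is a faithful elaboration rather than a different approach.
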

\begin{proof}
	Suppose $S \watermove T$ and $\topborder{S} \neq \topborder{T}$.
	Then $\topborder{S}$ and $\topborder{T}$ must satisfy the condition for $\topborder{S} \tbmove{} \topborder{T}$.
	
	Suppose $\topborder{S} \tbmove{} \topborder{T}$.
	We assume without loss of generality that $S$ is tight by Lemma~\ref{lem:tight}.
	The condition $\sum_{c \in C} \mcal{M}_c^{b}(\tau) \le | B_\varepsilon(\tau) |$ implies that units of color $c=S_0(b)[\topborder{S}(b)+1]$ 	in concern can be distributed to bins other than $b$.
	If $\mcal{M}_c^b(\topborder{S}) = \mcal{M}_c(\topborder{S})$, one can move the units in $b$ above $\topborder{T}(b)$ to other bins whose top color is $c$.
	If $\mcal{M}_c^b(\topborder{S}) > \mcal{M}_c(\topborder{S})$, one can move those units to an empty bin, which must be available in $S$, since $S$ is tight.
\end{proof}

%\begin{lemma}\label{lem:movable_condition}
%	Suppose that $\topborder{S}$ and $b \in B$ satisfy Eq.~(\ref{eq:movable_condition}).
%	Then there is $T$ such that 
%\begin{itemize}
%	\item $S \watermove^* T$,
%	\item $\topborder{S}(a)=\topborder{T}(a)$ for all $a \in B \setminus \{b\}$,
%	\item $\topborder{S}(b) > 0$ and $\topborder{T}(b) = \max \{\, i \in \mcal{D}(S(b)) \mid 0 \le i < \topborder{S}(b)\,\}$.
%\end{itemize}
%\end{lemma}
%\begin{proof}
%Figure~\ref{fig:topborder_nonmovable} may help intuitive understanding of the lemma.
%We assume without loss of generality that $S$ is tight by Lemma~\ref{lem:tight}.
%Eq.~(\ref{eq:movable_condition}) implies that units of color $c=S_0(b)[\topborder{S}(b)+1]$ in concern can be distributed without using the bin $b$.
%If $\mcal{M}_c^b(\topborder{S}) = \mcal{M}_c(\topborder{S})$, one can move the units in $b$ above $\topborder{T}(b)$ to other bins whose top color is $c$.
%If $\mcal{M}_c^b(\topborder{S}) > \mcal{M}_c(\topborder{S})$, one can move those units to an empty bin, which must be available in $S$, since $S$ is tight.
%\end{proof}

\begin{figure}\centering
	\begin{tikzpicture}[scale=0.4]
		\binpic{0.0}{0}{6/K,2/R}{6}{8}
		\binpic{1.5}{0}{5/K,3/R}{5}{8}
		\binpic{3.0}{0}{4/K,4/R}{4}{8}
		\binpic{4.5}{0}{5/R}{0}{8}
		\draw(0.5,-0.6) node{$b_1$};
		\draw(2.0,-0.6) node{$b_2$};
		\draw(3.5,-0.6) node{$b_3$};
		\draw(5,-0.6) node{$b_4$};
		\draw(2.5,-2) node{$S_1$};
	\end{tikzpicture}
	\quad\quad
	\begin{tikzpicture}[scale=0.4]
		\binpic{0.0}{0}{6/K,1/R}{6}{8}
		\binpic{1.5}{0}{5/K,2/R}{5}{8}
		\binpic{3.0}{0}{4/K,3/R}{4}{8}
		\binpic{4.5}{0}{8/R}{0}{8}
		\draw(0.5,-0.6) node{$b_1$};
		\draw(2.0,-0.6) node{$b_2$};
		\draw(3.5,-0.6) node{$b_3$};
		\draw(5,-0.6) node{$b_4$};
		\draw(2.5,-2) node{$S_2$};
	\end{tikzpicture}
	\quad\quad
	\begin{tikzpicture}[scale=0.4]
		\binpic{0.0}{0}{6/K,1/R}{6}{8}
		\binpic{1.5}{0}{5/K,3/R}{5}{8}
		\binpic{3.0}{0}{4/K,3/R}{4}{8}
		\binpic{4.5}{0}{7/R}{0}{8}
		\draw(0.5,-0.6) node{$b_1$};
		\draw(2.0,-0.6) node{$b_2$};
		\draw(3.5,-0.6) node{$b_3$};
		\draw(5,-0.6) node{$b_4$};
		\draw(2.5,-2) node{$S_3$};
	\end{tikzpicture}
	\caption{\label{fig:topborder_nonmovable} All configurations $S_1$, $S_2$ and $S_3$ are tight and have the same top-border table $\tau=\topborder{S_1}=\topborder{S_2}=\topborder{S_3}$, where $\tau(b_1)=7$, $\tau(b_2)=5$, and $\tau(b_3)=4$.
	 In those configurations, one can move the units above the top border of any of $b_1$ or $b_2$, but it is impossible for $b_3$.
	 This solely depends on the top-border table $\tau$ and the total number $\mcal{F}_\textrm{red}(\tau)=16$ of red units above the top borders.
	 It is independent of how those red units are distributed over the bins in the current configuration.}
\end{figure}
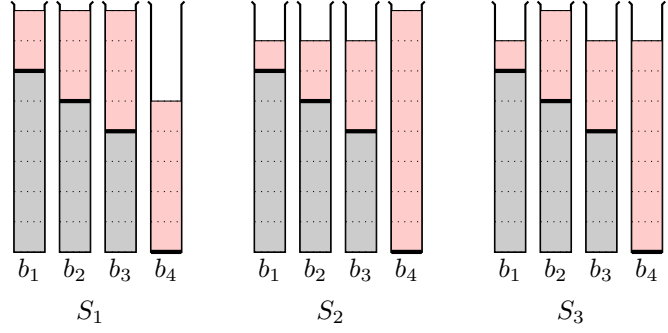
\begin{corollary}\label{cor:BSPvsWSP}
An initial configuration $S_0$ is a yes-instance of\/ $\BSP$ if and only if it is a yes-instance of\/ $\WSP$.
\end{corollary}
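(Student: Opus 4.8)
The plan is to derive Corollary~\ref{cor:BSPvsWSP} as an almost immediate consequence of Theorem~\ref{thm:movable_condition}, since the theorem has been set up precisely so that solvability is governed by the top-border table, a datum that is insensitive to the distinction between balls and water. First I would recall the trivial direction: since any water-move is simulated by a sequence of ball-moves, we already have ${\watermove^*}\subseteq{\ballmove^*}$, so every yes-instance of $\WSP$ is a yes-instance of $\BSP$. The content is the reverse implication, namely that $S_0 \ballmove^* G$ for some sorted $G$ forces the existence of a sorted configuration reachable by water-moves.

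The key step is to connect $\ballmove^*$-reachability to the top-border dynamics $\tbmove{}^*$. Observe that the initial configuration $S_0$ is trivially consistent with itself, since every bin is full and the monochrome-bin count is what it is; more to the point, $\topborder{S_0}$ satisfies Eq.~(\ref{eq:good_configuration}) because $S_0$ is itself a configuration obtained from $S_0$. Now suppose $S_0 \ballmove^* G$ with $G$ sorted. By the necessity argument preceding Theorem~\ref{thm:movable_condition} — in particular Observation~\ref{obs:basic} and the definition of $\tbmove{}$ — each border-removing step of a ball-move sequence respects the top-border transition relation, so $\topborder{S_0} \tbmove{}^* \topborder{G}$. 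Since $G$ is sorted, $\topborder{G}(b)=0$ for every $b\in B$. Thus $\ballmove^*$-solvability yields a $\tbmove{}^*$-path from $\topborder{S_0}$ to the all-zero table.

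Then I would feed this back through the ``if'' direction of Theorem~\ref{thm:movable_condition}, which is the crucial point: the theorem asserts that $\topborder{S} \tbmove{}^* \topborder{T}$ is \emph{sufficient} for the existence of a water-reachable configuration realizing that top-border table. Applying it with $S=S_0$ and the target table equal to the all-zero table, we obtain a configuration $T$ with $S_0 \watermove^* T$ and $\topborder{T}(b)=0$ for all $b$, i.e.\ $T$ is monochrome in every bin. As remarked in the text, once every bin is monochrome a sorted configuration is reachable by trivial (water-)moves — consolidating units of equal color into full bins. Hence $S_0 \watermove^* G'$ for some sorted $G'$, making $S_0$ a yes-instance of $\WSP$.

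The main obstacle I anticipate is the careful bookkeeping in establishing $\topborder{S_0}\tbmove{}^*\topborder{G}$ from an arbitrary ball-move solution, because a single $\tbmove{}$ step corresponds to removing exactly one top border whereas a ball-move sequence interleaves many partial moves and may revisit top-border tables; I would argue that it suffices to extract the subsequence of ball-moves at which some bin's top border strictly decreases, and check that at each such moment the feasibility condition $\sum_{c}\mcal{M}_c^{b}(\tau)\le |B_\varepsilon(\tau)|$ holds by the counting bound derived above. A secondary subtlety is ensuring consistency of intermediate configurations so that Theorem~\ref{thm:movable_condition} applies at each stage, but this follows since every configuration on an $\ballmove^*$-path from $S_0$ automatically satisfies Eq.~(\ref{eq:good_configuration}).
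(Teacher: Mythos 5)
Your proposal is correct and follows essentially the same route as the paper's proof: the trivial inclusion ${\watermove^*}\subseteq{\ballmove^*}$ for one direction, and for the converse the observation that a ball-move solution induces a $\tbmove{}^*$-path from $\topborder{S_0}$ to the all-zero top-border table (via the necessity/counting argument), which the sufficiency direction of Theorem~\ref{thm:movable_condition} then realizes by water-moves, finishing with Lemma~\ref{lem:tight} (trivial moves) once every bin is monochrome. The only difference is cosmetic: the paper packages the middle step as a one-ball-move simulation claim chained along the sequence, whereas you extract the whole $\tbmove{}^*$-path first and invoke the theorem once.
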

\begin{proof}
Since any water-move can be simulated by some ball-moves, it is enough to show the converse.
We claim that, for any configurations $S$ and $S'$ such that $\topborder{S} = \topborder{S'}$, if $S \ballmove T$, then there is $T'$ such that $S' \watermove^* T'$ and $\topborder{T} = \topborder{T'}$. 

By Theorem~\ref{thm:movable_condition}, either $\topborder{S}=\topborder{T}$ or $\topborder{S} \tbmove{} \topborder{T}$.
If $\topborder{S}=\topborder{T}$, then $T'=S'$ proves the claim.
Otherwise, the claim immediately follows from Theorem~\ref{thm:movable_condition}.
Thus, if $S_0 \ballmove^* T$ and $T$ is a goal configuration, then one can have $S_0 \watermove^* T'$ for some $T'$ with $\topborder{T} = \topborder{T'}$.
By modifying $T'$ tight by Lemma~\ref{lem:tight}, we get a goal configuration.
\end{proof}

% Note that from given a top border $\tau$ and a bin $b \in B$, $\tau'$ that satisfies the first two conditions for $\tau \tbmove{b} \tau'$ is uniquely determined.
If $\tau \tbmove{b} \tau'$, from the values $\mcal{F}_{c}(\tau)$ and $\mcal{G}_{c}(\tau)$, one can compute $\mcal{F}_c(\tau')$ and $\mcal{G}_c(\tau')$ in constant time, by preprocessing $S_0$, using the following equations:
\begin{align*}
	\mcal{F}_{c}(\tau') &= \begin{cases}
		\mcal{F}_c(\tau)+(\tau(b)-\tau'(b)) &\text{if $S_0(b)[\tau'(b)+1] = c$,}
\\		\mcal{F}_c(\tau) &\text{otherwise,}
	\end{cases}
\\
	\mcal{G}_c(\tau') &= \begin{cases}
		\mcal{G}_c(\tau) - (h-\tau(b))  & \text{if $S_0(b)[\tau(b)+1] = c$,}
\\		\mcal{G}_c(\tau) + (h-\tau'(b))  & \text{if $S_0(b)[\tau'(b)+1] = c$ and $\tau'(b) > 0$,}
\\		\mcal{G}_c(\tau) & \text{otherwise.}
	\end{cases}
\end{align*}

\begin{corollary}\label{cor:NP}
	$\BSP$ and $\WSP$ belong to $\mathrm{NP}$.
\end{corollary}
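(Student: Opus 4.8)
By Corollary~\ref{cor:BSPvsWSP} the two problems have the same yes-instances, so it suffices to put $\WSP$ in $\mathrm{NP}$ by exhibiting, for every yes-instance $S_0$, a certificate of polynomial size that can be checked in polynomial time. The plan is to certify solvability not by a literal sequence of water-moves (whose length we have not bounded) but by a sequence of top-border tables. Since $S_0 \watermove^* S_0$, the configuration $S_0$ is consistent with itself, so Theorem~\ref{thm:movable_condition} applies with $S=S_0$. A goal configuration has top-border table equal to the all-zero table $\mathbf{0}$, and conversely any configuration with top-border table $\mathbf{0}$ is monochrome in every bin and hence reaches a goal by trivial moves; combining this with Theorem~\ref{thm:movable_condition} shows that $S_0$ is a yes-instance if and only if $\topborder{S_0} \tbmove{}^* \mathbf{0}$. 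Hence the certificate is a sequence $\topborder{S_0}=\tau_0 \tbmove{b_1}\tau_1 \tbmove{b_2}\cdots\tbmove{b_L}\tau_L=\mathbf{0}$, which we record simply as the list of bins $b_1,\dots,b_L$.

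First I would bound $L$. Each step $\tau \tbmove{b}\tau'$ replaces $\tau(b)$ by the next smaller border of $S_0(b)$, so by Observation~\ref{obs:basic} the top-border value of $b$ strictly decreases through values of $\mcal{D}(S_0(b))\subseteq\{0,\dots,h-1\}$, removing exactly one element of $\mcal{D}(S_0(b))\setminus\{0\}$ per step. An originally empty bin has $\mcal{D}(S_0(b))=\{0\}$ and is therefore never among the $b_i$. Consequently $L \le \sum_{b\in B}|\mcal{D}(S_0(b))\setminus\{0\}| = D(S_0) \le n(h-1)$, which is polynomial in $n+h$ and hence in the input size.

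Next I would verify the sequence in polynomial time. As preprocessing, compute from $S_0$ the values $\mcal{F}_c(\tau_0)$ and $\mcal{G}_c(\tau_0)$ for every $c\in C$, the count $|B_\varepsilon(\tau_0)|$, and the running sum $\Sigma(\tau_0)=\sum_{c\in C}\mcal{M}_c(\tau_0)$; this costs time polynomial in the input. For a step $\tau \tbmove{b}\tau'$ I check that $\tau(b)>0$ and that $\tau'(b)$ is the next border below $\tau(b)$ in $S_0(b)$, and then test the defining inequality $\sum_{c\in C}\mcal{M}_c^{b}(\tau)\le|B_\varepsilon(\tau)|$. The crucial observation is that $\mcal{M}_c^{b}(\tau)$ differs from $\mcal{M}_c(\tau)$ only for the single color $c_0 = S_0(b)[\tau(b)+1]$, so $\sum_{c\in C}\mcal{M}_c^{b}(\tau) = \Sigma(\tau) - \mcal{M}_{c_0}(\tau) + \mcal{M}_{c_0}^{b}(\tau)$ can be evaluated in constant time. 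After the step I update $\mcal{F}_c$ and $\mcal{G}_c$ by the update equations stated above, which change only $O(1)$ colors, recompute the affected $\mcal{M}_c$ to refresh $\Sigma$, and increment $|B_\varepsilon|$ by one when $\tau'(b)=0$. Each step thus costs $O(1)$ arithmetic operations on integers of polynomially many bits, and finally I confirm $\tau_L=\mathbf{0}$.

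The step I expect to be the main obstacle is not the verification arithmetic but keeping the certificate polynomial when the number of bins $n+k$ is itself exponential in the input length (recall $k$ is encoded in $\Theta(\log k)$ bits). This is exactly why the top-border abstraction is essential: its moves only ever touch the $n$ full bins, its length is bounded by $D(S_0)$, and the empty bins enter the verification purely through the additive constant $k$ inside $|B_\varepsilon|$, an integer of $O(\log k)$ bits on which arithmetic remains polynomial. Soundness and completeness of the certificate are then immediate from Theorem~\ref{thm:movable_condition} and Corollary~\ref{cor:BSPvsWSP}.
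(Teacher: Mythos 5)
Your proposal is correct and follows essentially the same route as the paper: the certificate is a sequence of bins inducing top-border-table moves $\tbmove{}$ of length at most $D(S_0)\le n(h-1)$, whose validity is checked step-by-step via Theorem~\ref{thm:movable_condition} by maintaining the quantities $\mcal{F}_c$ and $\mcal{G}_c$ with the stated update equations. Your additional remarks (the reduction of the goal condition to the all-zero table, and the care about $k$ being binary-encoded) are worthwhile elaborations of details the paper leaves implicit, but they do not constitute a different argument.
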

\begin{proof}
	By Theorem~\ref{thm:movable_condition}, $S_0$ is a yes-instance if and only if there is a bin sequence $(b_1,\dots,b_m)$ with
	$m = D(S_0)$ ($= \sum_{b\in B}|\mcal{D}(S_{0}(b))\setminus\{0\}|$) % yota: D(S_0)の定義は忘れていそうなので補足…
	that admits a top-border table sequence $(\tau_0,\dots,\tau_m)$ such that $\tau_0=\topborder{S_0}$ and $\tau_{i-1} \tbmove{b_i} \tau_i$ for all $i$.
	Each $\tau_i$ is uniquely determined by $\tau_{i-1}$ and $b_i$ and one can verify $\tau_{i-1} \tbmove{b_i} \tau_i$ in constant time, by maintaining the values of $\mcal{F}_c$ and $\mcal{G}_c$ (or just $\mcal{F}_c-\mcal{G}_c$).
%	If an instance $S_0$ is solvable,  there should be a configuration sequence $(S_0,\dots,S_l)$ such that $S_{i-1} \ballmove S_i$ for $1 \le i \le l$ and $S_l$ is a goal configuration.
%	The sequence can be partitioned into $D(S_0)+1$ groups where all in the same group have the same top-border table. 
%	Let $(\tau_0,\tau_1,\dots,\tau_{D(S_0)})$ enumerate those distinct top-border tables in order, where $\tau_{i-1} \tbmove \tau_i$ for all $i \in \{1,\dots,m\}$ by Lemma~\ref{lem:movable_condition}.
%	if a configuration with top-border table $\tau_i$ can be modified into another with $\tau_{i+1}$ by some moves.
\end{proof}
%	Note that $\tau_{i+1}$ can be computed from $\tau_i$ and a bin $b$, and the values of the functions in Lemma~\ref{lem:movable_condition} for $\tau_{i+1}$ can be computed in constant time using those for $\tau_i$ and the bin $b$.
	A solution for an instance is essentially an order of borders of the instance to remove.
\begin{corollary}
	$\BSP$ and $\WSP$ can be solved in $O(h^{n})$ time. 
\end{corollary}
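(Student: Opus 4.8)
The plan is to recast solvability as a reachability problem on top-border tables and to bound the number of such tables by $h^n$. By Theorem~\ref{thm:movable_condition} together with Corollary~\ref{cor:BSPvsWSP}, the initial configuration $S_0$ is a yes-instance if and only if $\topborder{S_0} \tbmove{}^* \mathbf{0}$, where $\mathbf{0}$ denotes the all-zero top-border table; indeed, a configuration with all top-borders equal to $0$ is monochrome in every bin and can be completed to a goal by trivial moves. So it suffices to decide, in the directed graph whose vertices are top-border tables and whose arcs are the pairs with $\tau \tbmove{} \tau'$, whether $\mathbf{0}$ is reachable from $\topborder{S_0}$.

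First I would bound the vertex set. By Observation~\ref{obs:basic}-\ref{obs:monotone}, every table $\tau$ arising along a run satisfies $\tau(b) \in \mcal{D}(S_0(b))$ for all $b$, since moves only erase borders already present in $S_0$. A full bin stores a sequence of length $h$ and hence has $|\mcal{D}(S_0(b))| \le h$ borders, while an empty bin carries only the trivial border $0$ and stays monochrome throughout. Therefore the number of reachable tables is at most $\prod_{b} |\mcal{D}(S_0(b))| \le h^{n}$, the $n$ full bins being the only ones contributing a factor exceeding $1$.

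Next I would run a breadth-first search from $\topborder{S_0}$, expanding a table $\tau$ by trying each bin $b$ with $\tau(b) > 0$ and testing $\tau \tbmove{b} \tau'$. The only nontrivial part of that test is the inequality $\sum_{c} \mcal{M}_c^{b}(\tau) \le |B_\varepsilon(\tau)|$, and this I would evaluate in constant time by carrying the values $\mcal{F}_c(\tau)$ and $\mcal{G}_c(\tau)$ (equivalently $\mcal{F}_c - \mcal{G}_c$) along each search arc via the incremental identities stated just after Corollary~\ref{cor:BSPvsWSP}. A single move alters these quantities for at most the two colors $S_0(b)[\tau(b)+1]$ and $S_0(b)[\tau'(b)+1]$, so the running total $\sum_c \mcal{M}_c$, together with the single-color correction that turns $\mcal{M}_c$ into $\mcal{M}_c^{b}$, can be updated in $O(1)$. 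Visiting $O(h^{n})$ tables with constant work per arc and recognizing $\mathbf{0}$ on the fly then yields the claimed $O(h^{n})$ time bound.

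The step demanding the most care is precisely this transition test. I must ensure that the quantity $\mcal{M}_c^{b}(\tau)$—which departs from $\mcal{M}_c(\tau)$ only at the top color of the chosen bin $b$—is maintained so that each successor of a state is generated and validated without rescanning all colors, and that deduplicating tables (by hashing or direct addressing over $\prod_b \mcal{D}(S_0(b))$) does not introduce hidden overhead. The remaining ingredients, namely the correctness of reducing to $\tbmove{}^*$ and the completability of the all-zero table, are already supplied by Theorem~\ref{thm:movable_condition} and the surrounding discussion.
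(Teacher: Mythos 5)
Your proposal is correct and follows essentially the same route as the paper: the paper's proof also reduces the problem to a search over top-border tables (via Theorem~\ref{thm:movable_condition} and the machinery of Corollary~\ref{cor:NP}) and bounds the state space by $\prod_{b \in B} |\mcal{D}(S_0(b))| \le h^{n}$, with transitions checked in constant time by maintaining $\mcal{F}_c$ and $\mcal{G}_c$. You merely spell out explicitly the breadth-first search that the paper leaves implicit.
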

\begin{proof}
	There are at most $\prod_{b \in B} |\mcal{D}(S_0(b))| \le h^n$ distinct top-border tables. 
\end{proof}

%\begin{corollary}\label{cor:shortest-path}\end{corollary}

\begin{corollary}\label{cor:shortest-path-ball}
For every yes-instance of\/ $\BSP$, there exists a sequence of ball-moves to a goal configuration of length at most $(2h-1)hn$.
\end{corollary}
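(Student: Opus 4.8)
The plan is to convert the combinatorial characterization behind Corollary~\ref{cor:NP} into an explicit short ball-move solution. By Corollary~\ref{cor:NP}, a yes-instance $S_0$ admits a top-border table sequence $\topborder{S_0} = \tau_0 \tbmove{b_1} \tau_1 \tbmove{b_2} \cdots \tbmove{b_m} \tau_m$ in which $\tau_m$ is the all-zero table and $m = D(S_0)$. I would realize this abstract sequence one step at a time by honest ball-moves, keeping the current configuration tight (Lemma~\ref{lem:tight}) so that the empty bins promised by $\sum_{c}\mcal{M}_c^{b_i}(\tau_{i-1}) \le |B_\varepsilon(\tau_{i-1})|$ are actually available whenever a block has to be poured away. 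The ball-moves then fall into two groups: the \emph{productive} moves that peel one monochrome block per abstract step, and the \emph{auxiliary} moves that restore tightness between steps and finally merge the monochrome bins into full sorted ones.

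First I would bound the productive moves. Realizing $\tau_{i-1}\tbmove{b_i}\tau_i$ just pours the whole monochrome top block of $b_i$ (of color $S_0(b_i)[\tau_{i-1}(b_i)+1]$) onto bins whose top color already matches, or onto a single empty bin, which costs exactly as many ball-moves as the height of that block, namely at most $h - \tau_{i-1}(b_i) \le h$. Every abstract step removes exactly one border, and no auxiliary move changes the number of (non-trivial) borders, so there are exactly $m = D(S_0)$ such steps; since each of the $n$ full bins contributes at most $h-1$ non-trivial borders, $D(S_0) \le (h-1)n$. Hence the productive moves number at most $(h-1)hn$.

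The remaining task, and the step I expect to be the real obstacle, is to bound the auxiliary moves by $h\cdot hn$, so that the grand total is at most $(h-1)hn + h\cdot hn = (2h-1)hn$. Locally this looks benign: a block has at most $h$ units and an empty bin has capacity $h$, so one pour fills at most one empty bin, and the update rules for $\mcal{F}_c$ and $\mcal{G}_c$ show that each step moves every $\mcal{M}_c$ by at most one; hence each step creates only boundedly many surplus monochrome bins, each removable by a consolidation of at most $h$ ball-moves. The difficulty is that charging such a per-step overhead to all $(h-1)n$ steps already overshoots the budget, so I cannot argue step-by-step. Instead I would set up a charging/amortization argument that follows each unit through the monochrome bins it successively occupies and bounds how many times it is consolidated or merged, so that each of the $hn$ units pays at most $h$ auxiliary moves in total. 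Adding the at most $(h-1)hn$ productive moves to these at most $h\cdot hn$ auxiliary moves yields a ball-move solution of length at most $(2h-1)hn$.
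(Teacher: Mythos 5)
Your skeleton is the same as the paper's: realize the abstract top-border-table sequence $\tau_0 \tbmove{b_1} \tau_1 \tbmove{b_2} \cdots \tbmove{b_m} \tau_m$ by honest ball-moves, use Lemma~\ref{lem:tight} to keep the current configuration tight, and account separately for border-eliminating moves and tightness-restoring moves. Your bound on the productive moves is fine (the paper even gets $h-1$ per block rather than $h$, since a block above a non-trivial border has at most $h-1$ units). The genuine gap is the auxiliary bound: the amortization you defer to --- ``each of the $hn$ units pays at most $h$ auxiliary moves'' --- is never argued, no invariant is proposed that would control how often a given unit gets consolidated as it migrates between monochrome bins, and this is exactly the quantitative heart of the corollary. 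As written, the proof is incomplete precisely at the step you yourself flag as the obstacle.

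What's more, the reason you give for abandoning the per-step argument is an arithmetic misjudgment, and the per-step argument is exactly the paper's proof. The observation you are missing is that from a \emph{tight} configuration, eliminating one border leaves \emph{at most one} surplus monochrome bin, not merely ``boundedly many'': if the peeled block of color $c$ is poured into an empty bin (the case $\mcal{M}_c^{b}(\tau) > \mcal{M}_c(\tau)$ in the proof of Theorem~\ref{thm:movable_condition}), then the requirement $\mcal{M}_c$ rises in lockstep, so that new monochrome bin is needed rather than surplus; hence the only candidate for surplus is the source bin $b$ itself, in case its remaining content is monochrome. Emptying that single bin costs at most $h$ ball-moves. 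So each of the at most $(h-1)n$ borders is charged at most $(h-1)+h = 2h-1$ moves, giving a total of $(2h-1)(h-1)n \le (2h-1)hn$; even with your looser $h$-per-block estimate the total is $2h(h-1)n \le (2h-1)hn$, still within budget. With this one observation the step-by-step charging you dismissed closes the proof, and no amortization is needed.
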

\begin{proof}
In $\BSP$, $h-1$ ball-moves are enough to eliminate a border from a tight configuration, if there is a way to eliminate the border.
To make the obtained configuration tight takes at most $h$ moves, since the obtained configuration may have at most one monochrome bin to empty when the previous configuration is tight.
Since there can be at most $(h-1)n$ borders to eliminate, every yes-instance has a solution consisting of at most $(2h-1)hn$ ball-moves.
\end{proof}

\begin{corollary}\label{cor:shortest-path-water}
For every yes-instance of\/ $\WSP$, there exists a sequence of water-moves to a goal configuration of length at most $2(h-1)nl$ where $l=\min\{h,n\}$.
\end{corollary}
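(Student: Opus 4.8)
The plan is to mirror the proof of Corollary~\ref{cor:shortest-path-ball} for ball-moves, but to exploit that a single water-move can relocate an entire monochrome segment at once, so that the cost of removing a border is governed by the number of target bins touched rather than by the number of units moved. As in the ball case, I would follow the border-removal schedule guaranteed by Theorem~\ref{thm:movable_condition}: a yes-instance admits a top-border table sequence $\topborder{S_0} = \tau_0 \tbmove{} \tau_1 \tbmove{} \cdots \tbmove{} \tau_m$ with $\tau_m \equiv 0$ and $m = D(S_0) \le (h-1)n$, since each of the $n$ initially full bins contributes at most $h-1$ borders and borders are only removed (Observation~\ref{obs:basic}). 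Keeping a tight configuration throughout by Lemma~\ref{lem:tight}, I would realize each transition $\tau_{i-1} \tbmove{b_i} \tau_i$ by an \emph{elimination} phase, pouring the top monochrome segment of $b_i$ out of $b_i$, followed by a \emph{re-tightening} phase emptying the at most one surplus monochrome bin thereby created. The target bound $2(h-1)nl$ then follows once each phase is shown to cost at most $l=\min\{h,n\}$ water-moves.

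The heart of the argument is this per-phase bound. For the bound by $h$, I would observe that the relocated segment is monochrome and has length at most $h-1$ (a non-monochrome bin has its top border at height at least $1$, so at most $h-1$ units lie above it); since every water-move transfers at least one unit, at most $h-1$ moves suffice. The more delicate bound by $n$ must be independent of the number $k$ of empty bins. Here I would use the structural invariant that under water-moves a monochrome bin stays monochrome: a water-move into a bin of top color $c$, or into an empty bin, only appends units of a single color, so every non-monochrome bin that ever occurs is one of the original $n$ filled bins. Consequently $\sum_{c}|B_c(\tau)| \le n$ at every step.

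Combining these via the two cases in the proof of Theorem~\ref{thm:movable_condition}, if $\mcal{M}_c^{b_i}(\tau_{i-1}) = \mcal{M}_c(\tau_{i-1})$ the segment fits in the vacant top space of the non-monochrome $c$-bins, so I pour it into those, using at most $|B_c(\tau_{i-1})| \le n$ and at most $h-1$ moves, hence at most $l$; if instead $\mcal{M}_c^{b_i}(\tau_{i-1}) > \mcal{M}_c(\tau_{i-1})$, tightness supplies an empty bin, and as the whole segment has length at most $h-1<h$ it fits there in a single move. The re-tightening phase empties one monochrome bin of color $c$ holding at most $h$ units, distributing them into the non-monochrome $c$-bins in at most $\min\{h,n\}=l$ moves, or into an available empty bin in one move. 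Thus each of the $m \le (h-1)n$ transitions is realized in at most $2l$ water-moves, for a total of at most $2(h-1)nl$.

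The step I expect to be the main obstacle is making the per-phase count uniform in $k$. One must argue that an efficient relocation never needs to touch more than $\min\{h,n\}$ bins, which rests on two points: always dumping an overflowing segment into a \emph{single} empty bin rather than spreading it thinly, and the invariant that non-monochrome bins never outnumber the $n$ original ones. I would also need to check that choosing these efficient relocations still yields a tight configuration whose top-border table is exactly $\tau_i$, so that the schedule of Theorem~\ref{thm:movable_condition} can be continued to the next border.
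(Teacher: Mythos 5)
Your skeleton is the paper's skeleton (eliminate the at most $(h-1)n$ borders one by one from a tight configuration, re-tighten after each elimination, charge at most $2l$ moves per border), and your bound by $h$ per phase is correct. But the step you yourself flag as the main obstacle---the bound by $n$ per phase---is where your argument genuinely fails. The claim that when $\mcal{M}_c^{b_i}(\tau_{i-1}) = \mcal{M}_c(\tau_{i-1})$ ``the segment fits in the vacant top space of the non-monochrome $c$-bins'' is false. What that equality guarantees (this is exactly the arithmetic behind \cref{thm:movable_condition}) is that the displaced units fit into the combined free space of \emph{all} other bins whose top color is $c$, i.e.\ the bins of $B_c$ \emph{together with} the $\mcal{M}_c$ monochrome $c$-bins that a tight configuration retains. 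Nothing prevents a tight configuration in which every non-monochrome $c$-bin other than $b_i$ is already full while a monochrome $c$-bin is nearly empty; then the whole segment must be poured into monochrome bins. (The re-tightening phase has the same defect: the surplus monochrome bin may have to be poured into the \emph{other} monochrome $c$-bins, as in the proof of \cref{lem:tight}; pouring it into an empty bin, which you offer as an alternative, accomplishes nothing, since it leaves the number of monochrome $c$-bins unchanged.) Once monochrome bins can be targets, your count ``moves $\le$ number of non-monochrome $c$-bins $\le n$'' no longer applies, and your invariant $\sum_c |B_c(\tau)| \le n$, while true, gives no control over how many monochrome $c$-bins get touched.

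The paper's proof obtains the bound by $n$ from a different, $k$-independent mechanism that makes no assumption whatsoever about which bins serve as targets: by the maximality built into the definition of a water-move, any move that does not exhaust the segment being relocated must fill its target. Hence within one phase every move except possibly the last fills a distinct bin, and those bins are still full when the phase ends. Since the total volume is exactly $hn$, at most $n$ bins can be full simultaneously; moreover, at the end of a phase either the source (elimination) or the last, partially filled target still holds at least one unit, so at most $n-1$ bins were filled whenever an extra non-filling move occurred. Either way a phase consists of at most $n$ moves, which combined with your unit-count bound ($h-1$ units for elimination, $h$ for re-tightening) gives at most $l=\min\{h,n\}$ moves per phase and the total $2(h-1)nl$. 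If you replace your ``targets are non-monochrome'' counting by this full-bin counting, the rest of your proposal goes through essentially as the paper's proof.
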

\begin{proof}
In $\WSP$, $l=\min\{h,n\}$ water-moves are enough to eliminate a border from a tight configuration, if there is a way to eliminate the border.
To make the obtained configuration tight, it takes again $l$ moves.
To see this, observe that the obtained configuration may have at most one monochrome bin to empty as in the ball case.
If one water-move does not empty the source monochrome bin, it must make the target bin full.
Since one cannot make more than $n$ full bins, to empty a monochrome bin takes at most $l$ water-moves.
Therefore, every yes-instance has a solution consisting of at most $2(h-1)nl$ water-moves.
\end{proof}

\section{NP-completeness}
\label{sec:np}

In this section, we show that $\BSP$ and $\WSP$ are NP-complete even with two colors.
By \cref{cor:BSPvsWSP,cor:NP}, it suffices to show that $\WSP$ with two colors is NP-hard.
By slightly modifying the proof, we also show that given a trivial yes-instance of $\WSP$,
it is NP-complete to find a shortest sequence of water-moves to a sorted configuration,
where an instance is trivial in the sense that it contains many (say $hn$) empty bins.

\begin{theorem}
\label{th:NP-h}
$\BSP$ and $\WSP$ are NP-complete even with two colors.
\end{theorem}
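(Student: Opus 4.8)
The plan is to lean on the two results just established and reduce the entire statement to a single hardness claim. Membership in $\mathrm{NP}$ is already supplied by \cref{cor:NP}, and \cref{cor:BSPvsWSP} guarantees that $\BSP$ and $\WSP$ have identical yes-instances; hence it suffices to give a polynomial-time many-one reduction from a known NP-hard problem to $\WSP$ restricted to two colors, say red and blue. I would reduce from \textsc{3-Partition}: given $3m$ positive integers $a_1,\dots,a_{3m}$ with $\sum_i a_i = mT$ and $T/4 < a_i < T/2$ for all $i$, decide whether the indices split into $m$ triples each summing to $T$. Since \textsc{3-Partition} is \emph{strongly} NP-hard and a puzzle instance is written out unit by unit (using $\Theta(nh\log|C|)$ bits), a construction whose capacity $h$ and bin count $n$ are polynomial in $T$ and $m$ is a legitimate polynomial reduction.

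For the construction I would set the capacity to $h=T$ and create one \emph{item bin} for each integer $a_i$, filled from the bottom with $h-a_i$ blue units topped by $a_i$ red units, so that every item bin is full. The reds then total $mT=mh$ and the blues total $2mh$, whence any sorted configuration must consist of exactly $m$ monochrome red bins and $2m$ monochrome blue bins. Because each $a_i\in(T/4,T/2)$, a full red bin of height $T$ can only be assembled from exactly three red caps whose sizes sum to $T$ (two caps fall short of $T$, four overshoot it); thus a sorted configuration \emph{induces} a partition of the $a_i$ into $m$ triples of sum $T$, while conversely a valid \textsc{3-Partition} dictates which red caps to merge. I would add a small, carefully chosen number $k$ of empty bins, generous enough to relocate the red caps and consolidate the blue bases, yet so scarce that no unintended maneuvering is possible.

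For the forward direction (\emph{completeness}) I would, given a valid partition, exhibit an explicit schedule of water-moves: pour the three red caps of each triple together into a single bin (they fit exactly, yielding a full red bin), which exposes the blue bases, and then merge the blues into $2m$ full bins; \cref{cor:shortest-path-water} keeps the length polynomial, so I only need to verify that the chosen $k$ empty bins suffice at every intermediate step. The reverse direction (\emph{soundness}) is where the real work lies: from the mere existence of a solving sequence I must recover a genuine partition. The danger is that moves may \emph{split} a red cap across several bins, so that the final red bins need not respect the item boundaries at all, destroying the \textsc{3-Partition} correspondence. The main obstacle is therefore to rule out such splittings, and I expect to do this by a counting argument driven by the scarcity of buffer: using the necessary inequality $\sum_{c}\mcal{M}_c(\tau)\le|B_\varepsilon(\tau)|$ from the equivalence section (the engine behind \cref{thm:movable_condition}), I would show that with the chosen $k$ there is never enough free space to hold a fractured cap, so every cap must travel as an indivisible block and any solution collapses to one grouping whole caps into triples of total height exactly $h$. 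Proving this ``no-splitting'' lemma, and pinning down the precise value of $k$ that makes it hold while still admitting the intended schedule, is the crux of the argument.
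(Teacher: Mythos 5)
Your construction is exactly the paper's: two colors, capacity $B$, one full red-blue bin per item with $a_i$ red units atop $B-a_i$ blue ones, and a reduction from \textsc{3-Partition} (strong NP-hardness justifying the unary-sized bins). The completeness direction also matches the paper's intended schedule. But the proposal has a genuine gap, and it is precisely where you say the crux lies. First, the number $k$ of empty bins is never fixed, and it is not a detail one can defer: by the results of \cref{sec:bins}, once $k$ is even moderately large \emph{every} instance becomes a yes-instance, so soundness collapses; the paper takes exactly \emph{one} empty bin, and both directions of its proof are calibrated to that choice. Second, and more seriously, your soundness plan rests on an unproven ``no-splitting'' lemma, and the tool you propose for it cannot deliver it. The inequality $\sum_{c}\mathcal{M}_c(\tau)\le|B_\varepsilon(\tau)|$ and the whole top-border apparatus of Section~2 (\cref{lem:tight}, \cref{thm:movable_condition}) are designed to show that solvability depends \emph{only} on the top-border table, i.e., it is insensitive to how the units above the borders are distributed among bins. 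A fractured red cap does not change the top-border table at all, so no condition phrased in terms of $\mathcal{M}_c$ and $B_\varepsilon$ can detect, let alone forbid, fracturing. You would need a genuinely different invariant.

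The paper's soundness argument avoids the no-splitting question entirely. It strengthens the statement (adding $\rho$ dummy full red bins and $\beta$ dummy full blue bins) so that induction on $m$ goes through, and then argues about the moment the \emph{third} bin is emptied of its red units: letting $\alpha=a_p+a_q+a_r$ for the first three such bins, a counting claim (\cref{clm:S_j-conditions}) pins down the numbers of red, blue, and red-blue bins at that moment, and a second claim shows $\alpha=B$ by proving that if $\alpha<B$ those counting conditions would be preserved by \emph{every} subsequent move, contradicting that the final configuration has no red-blue bin. The remaining configuration is then, up to renaming, a smaller instance plus dummies, and induction finishes. Note that this argument tolerates split caps throughout; it extracts the partition from a global invariant, not from block-integrity of the caps. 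So while your reduction is the right one, the heart of the hardness proof is missing from your proposal, and the route you sketch for it points in a direction that the paper's own equivalence machinery shows to be a dead end.
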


\begin{proof}
As mentioned above, it suffices to show that $\WSP$ with two colors is NP-hard.
We prove it by a reduction from the following problem \textsc{3-Partition},
which is known to be NP-complete even if $B$ is bounded from above by some polynomial in $m$~\cite{GJ79}.

\ptitle{3-Partition}
\begin{listing}{aaa}
\item[{\bf Input:}] Positive integers $a_1, a_2, a_3, \dots, a_{3m}$ such that
	     $\sum_{i=1}^{3m}a_i=mB$ for some positive integer $B$ and $B/4<a_i<B/2$ for $1\le i\le 3m$.
\item[{\bf Question:}] Is there a partition of $\{1,2,\dots,3m\}$ into 
	     $m$ subsets $A_1,A_2,\dots,A_m$ such that $\sum_{i\in A_j} a_i=B$ for $1\le j\le m$?
\end{listing}
\medskip

In the reduction, we use two colors red and blue.
A non-empty bin is \emph{red} (\emph{blue}) if it contains red (blue, resp.) units only.
A non-empty bin is \emph{red-blue} if its top units are red and the other units are blue.
From an instance $\langle a_{1}, \dots, a_{3m} \rangle$ of \textsc{3-Partition},
we define an instance $S$ of $\WSP$ as follows (see \cref{fig:water}):
\begin{itemize}
  \item the capacity of each bin is $B$;
  
  \item for each $i \in [3m]$, it contains a full red-blue bin $b_{i}$ with $a_{i}$ red units and $B-a_{i}$ blue units;
  
  \item it contains one empty bin.
\end{itemize}

We show that $S$ is a yes-instance of $\WSP$
if and only if $\langle a_{1}, \dots, a_{3m} \rangle$ is a yes-instance of \textsc{3-Partition}.
\begin{figure}[h]
  \centering
  \includegraphics[scale=0.75]{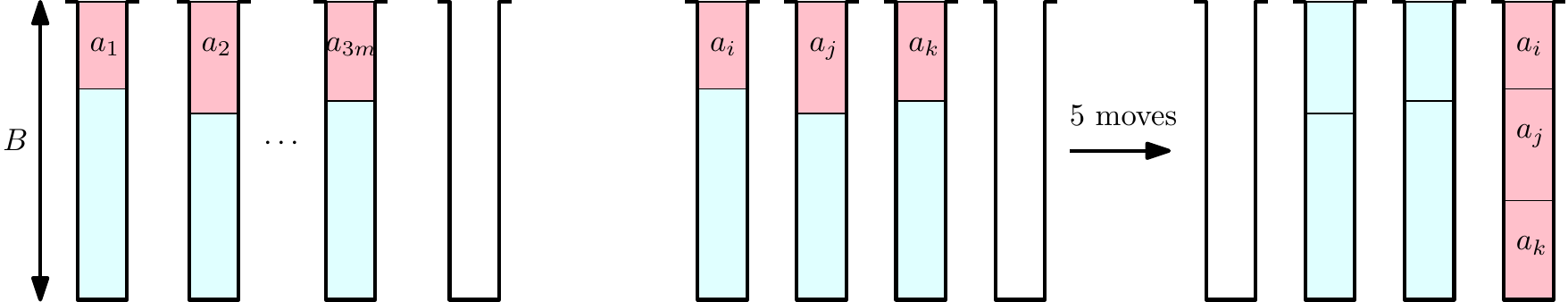}
  \caption{The reduction from \textsc{3-Partition} to $\WSP$.}
  \label{fig:water}
\end{figure}

To show the if direction, assume that $\langle a_{1}, \dots, a_{3m} \rangle$ is a yes-instance of \textsc{3-Partition}.
We can construct a sequence of water-moves from $S$ to a sorted configuration as follows.
Let $A_{1}, \dots, A_{m}$ be a partition of $\{1,\dots,3m\}$ such that $\sum_{i \in A_{j}} a_{i} = B$ for $1 \le j \le m$.
Using one empty bin, we can sort $b_{j_{1}}, b_{j_{2}}, b_{j_{3}}$ with $A_{j} = \{j_{1}, j_{2}, j_{3}\}$ as follows (see \cref{fig:water}).
We first move all red units to the empty bin. 
Since the number of red units is $a_{j_{1}} + a_{j_{2}} + a_{j_{3}} = B$, the bin becomes full.
Now $b_{j_{1}}$, $b_{j_{2}}$, and $b_{j_{3}}$ are blue bins 
containing $(B - a_{j_{1}}) + (B - a_{j_{2}}) + (B - a_{j_{3}}) = 2B$ blue units in total.
We move the units in $b_{j_{3}}$ to $b_{j_{1}}$ and $b_{j_{2}}$.
After that, $b_{j_{1}}$ and $b_{j_{2}}$ become full and $b_{j_{3}}$ is empty.
Using this new empty bin, we can continue and sort all bins.

To show the only-if direction, we prove the following slightly modified statement. 
(What we need is the case of $\rho = \beta = 0$.)
\begin{quote}
Let $\rho$ and $\beta$ be non-negative integers,
and $S_{0}$ be the instance of $\WSP$ obtained from $S$ by adding $\rho$ full red bins and $\beta$ full blue bins.
If $S_{0}$ is a yes-instance of the decision problem, 
then $\langle a_{1}, \dots, a_{3m} \rangle$ is a yes-instance of \textsc{3-Partition}.
\end{quote}

Assume that there is a sorting sequence $S_{0}, \dots, S_{\ell}$, where $S_{\ell}$ is a sorted configuration.
We use induction on $m$. (We need the dummy bins for the induction step.)
The case of $m = 1$ is trivial since all instances of \textsc{3-Partition} with $m=1$ are yes-instances.
Assume that $m \ge 2$ and that the statement holds for strictly smaller instances of \textsc{3-Partition}.

Observe that $S_{0}$ contains $3m + \rho + \beta + 1$ bins
and $(3m + \rho + \beta)B$ units of water ($(m + \rho)B$ red units and $(2m + \beta)B$ blue units).
Since the capacity of bins is $B$, every $S_{i}$ contains at most one empty bin, and if there is one, then the other bins are full.
Observe also that each bin in each $S_{i}$ is either red, blue, red-blue, or empty.

We say that a move \emph{opens} a bin $b_{i}$ if the move makes $b_{i}$ red units free for the first time.
Observe that each red unit in $b_{1}, \dots, b_{3m}$ has to move at most once since otherwise non-monochromatic bins will remain.
Thus, all $b_{1}, \dots, b_{3m}$ will eventually be opened.
Let $b_{p}$, $b_{q}$, and $b_{r}$ be the first three bins opened in the sorting sequence in this order.

Assume that $b_{r}$ is opened by the move from $S_{j-1}$ to $S_{j}$. Let $\alpha = a_{p} + a_{q} + a_{r}$.
\begin{claim}
\label{clm:S_j-conditions}
$S_{j}$ satisfies the following conditions.
\begin{enumerate}
  \item There are $\rho + 1$ red bins, and they contain at least $\rho B  + \alpha$ units.
  \item There are $\beta+3$ blue bins, and they contain $(\beta+3) B - \alpha$ units.
  \item There are $3m - 3$ red-blue bins, and they contain $(2 m-3) B + \alpha$ blue units and at most $m B - \alpha$ red units.
  \item There is no empty bin. (This follows for free from the other conditions.)
\end{enumerate}
\end{claim}
\begin{proof}
Observe that we cannot create a new red-blue bin by any sequence of moves.
Thus the number of red-blue bins is $3m-3$.
Since each red-blue bin $b_{i}$ contains exactly $B - a_{i}$ blue units and at most $a_{i}$ red units,
the red-blue bins contain exactly $(2 m-3) B + \alpha$ blue units
and at most $m B - \alpha$ red units in total.

The blue bins contain exactly $(\beta+3) B - \alpha$ units.
Since $b_{r}$ contains exactly $B - a_{r}$ units,
the other blue bins contain $(\beta+2) B - (a_{p} + a_{q})$ units,
and thus the number of blue bins is at least $\lceil ((\beta+2) B - (a_{p} + a_{q})) / B\rceil + 1 \ge \beta + 3$,
where the inequality holds as $a_{p} + a_{q} < B$.

Since the red bins contain at least $\rho B + \alpha$ units,
the number of red bins is at least $\rho + \lceil \alpha/B \rceil \ge \rho + 1$.
Recall that the total number of bins is $3m + \rho + \beta + 1$.
Thus, 
the number of red bins is exactly $\rho + 1$ and
the number of blue bins is exactly $\beta + 3$.
\end{proof}

\begin{claim}
$\alpha = B$.
\end{claim}
\begin{proof}
In $S_{j}$, the $\rho + 1$ red bins contain at least $\rho B + \alpha$ units. 
This implies that $\alpha \le B$.
Suppose to the contrary that $\alpha < B$. 
We show that under this assumption, the conditions in \cref{clm:S_j-conditions} cannot be violated by any sequence of moves.
This contradicts that the final state $S_{\ell}$ does not contain any red-blue bin.

Let $T$ be a configuration satisfying the conditions in \cref{clm:S_j-conditions}
and $T'$ be a configuration obtained from $T$ by one move.
It suffices to show that $T'$ still satisfies the conditions in \cref{clm:S_j-conditions}.
Assume that the move is from a bin $b$ to another bin $b'$.

First consider the case where we moved blue units.
In this case, $b$ and $b'$ have to be blue bins. 
Hence, to show that all properties are satisfied, it suffices to show that $b$ is not empty after the move.
Indeed, if $b$ becomes empty after the move, then 
$\beta + 2$ blue bins contain $(\beta+3) B - \alpha > (\beta+2)B$ units in total. 
This contradicts the capacity of bins.

Next consider the case where we moved red units.
The bins $b$ and $b'$ are red or red-blue.
If $b$ becomes empty (when it was red) or blue (when it was red-blue),
then the total number of red bins and red-blue bins
becomes $3m + \rho -3$ but they still contain $(3m+\rho-3)B + \alpha$ units
($(m+\rho)B$ red and $(2m-3)B + \alpha$ blue).
This contradicts the capacity of bins.
Thus, we know that the type of $b$ does not change by the move.
If the move is either from red to red, from red-blue to red-blue, or from red-blue to red,
then the all conditions are satisfied.
Assume that the move is from red to red-blue and that the red-blue bins contain more than $m B - \alpha$ red units after the move.
Now the $3m-3$ red-blue bins contain more than $(2m-3)B+\alpha + m B - \alpha = (3m-3)B$ units.
This again contradicts the capacity of bins.
\end{proof}

By the claims above and the capacity of bins, $S_{j}$ satisfies the following conditions.
\begin{enumerate}
  \item There are $\rho + 1$ full red bins.
  \item There are $\beta+3$ blue bins, and they contain $(\beta+2) B$ units.
  \item There are $3m - 3$ full red-blue bins, and they contain $(2m-2)B$ blue units and $(m-1)B$ red units.
  \item There is no empty bin.
\end{enumerate}
Observe that each red-blue bin $b_{i}$ in $S_{j}$ is not opened so far,
and thus contains $B - a_{i}$ blue units (and $a_{i}$ red units as it is full).

Let us take a look at the sorting sequence from $S_{j}$ to $S_{\ell}$.
Since there is no empty bin and all bins containing red units are full in $S_{j}$,
we need to move blue units in blue bins first.
Unless we make an empty bin, the situation does not change.
Let $S_{h}$ be the first configuration after $S_{j}$ that contains an empty bin.
By the capacity of bins and the discussion so far, $S_{h}$ satisfies the following conditions.
\begin{enumerate}
  \item There are $\rho + 1$ full red bins.
  \item There are $\beta + 2$ full blue bins.
  \item For each $i \in \{1,\dots,3m\} \setminus \{p,q,r\}$,
  the bin $b_{i}$ is a full red-blue bin that contains $a_{i}$ red units and $B - a_{i}$ blue units.
  \item There is one empty bin.
\end{enumerate}

Without loss of generality, assume that $\{p,q,r\} = \{3m-2, 3m-1, 3m\}$.
Let $S'$ be the instance of $\WSP$ obtained from $a_{1}, \dots, a_{3(m-1)}$ by the reduction above,
and $S'_{0}$ be the one obtained from $S'$ by adding $\rho + 1$ full red bins and $\beta + 2$ full blue bins.
Observe that $S'_{0}$ can be obtained from $S_{h}$ by renaming the bins.
Thus the sorting sequence from $S_{h}$ to $S_{\ell}$ can be applied to $S'_{0}$ by appropriately renaming the bins.
Therefore, we can apply the induction hypothesis to $S'_{0}$ and thus $a_{1}, \dots, a_{3(m-1)}$ is a yes-instance of \textsc{3-Partition}.
Since $\alpha = a_{3m-2} + a_{3m-1} + a_{3m} = B$, the instance $a_{1}, \dots, a_{3m}$ is also a yes-instance of \textsc{3-Partition}.
\end{proof}

\cref{cor:shortest-path-ball,cor:shortest-path-water,th:NP-h} imply that 
given an integer $t$ and a configuration $S$,
it is NP-complete to decide whether there is a sequence of length at most $t$
from $S$ to a sorted configuration under both settings in $\BSP$ and $\WSP$.
We observe a slightly stronger result for $\WSP$.

\begin{corollary}
\label{cor:water-short-npc}
Given an integer $t$ and an instance $S$ of $\WSP$,
it is NP-complete to decide whether there is a sorting sequence for $S$ with length at most $t$
even if $S$ is guaranteed to be a yes-instance.
\end{corollary}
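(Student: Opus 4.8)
The plan is to reuse the reduction from \textsc{3-Partition} in the proof of \cref{th:NP-h}, but to pad the instance with enough empty bins that it becomes trivially solvable, and then to calibrate the length bound $t$ so that a solution of length at most $t$ exists precisely when the underlying \textsc{3-Partition} instance is a yes-instance. Membership in NP is immediate: by \cref{cor:shortest-path-water} every yes-instance admits a sorting sequence of length polynomial in $n+h$, so a nondeterministic machine can guess and verify a sequence of length at most $t$. Thus all the work lies in proving that the length-bounded problem, restricted to guaranteed yes-instances, is NP-hard.

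For the construction I would start from the instance $S$ of \cref{th:NP-h}, namely the $3m$ full red-blue bins $b_1,\dots,b_{3m}$ of capacity $B$ (with $b_i$ holding $a_i$ red over $B-a_i$ blue) together with one empty bin, and then add $hn$ further empty bins (legitimate since \textsc{3-Partition} remains NP-hard with $B$ polynomially bounded in $m$, so the padding is polynomial). Since $k \ge hn$ empty bins suffice to sort any instance trivially, the padded instance $S_0$ is guaranteed to be a yes-instance, as required. For the threshold I would take $t=5m$, the length of the following canonical solution that uses only a single empty bin: for each group $\{j_1,j_2,j_3\}$ of a \textsc{3-Partition} solution, pour the red of $b_{j_1},b_{j_2},b_{j_3}$ in turn into the current empty bin (three moves, which fill it to exactly $B$ because $a_{j_1}+a_{j_2}+a_{j_3}=B$) and then consolidate the remaining $2B$ blue of the three bins into two full bins and one freed bin (two moves), reusing the freed bin for the next group. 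Hence if \textsc{3-Partition} is a yes-instance, $S_0$ has a solution of length $5m=t$, with the extra empty bins left untouched.

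The crux is the converse: if \textsc{3-Partition} is a no-instance, then every sorting sequence for $S_0$ has length strictly greater than $t$, \emph{no matter how many empty bins are present}. I would exploit that throughout the play every bin is red, blue, red-blue, or empty, so each move transfers either red or blue water; let $R$ and $W$ count the red-moves and blue-moves. I would prove two bounds independently. First, $W \ge 2m$: the final configuration has exactly $2m$ full blue bins, and each must receive at least one blue pour (a red-blue bin retains only $B-a_i<B$ blue, and a fresh bin starts empty), and each move has a single target, so there are at least $2m$ distinct receiving moves. Second, $R \ge 3m$, since each of the $3m$ red-blue bins must be opened by a red-move having that bin as source, and these openings are distinct. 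The key refinement is that $R=3m$ forces a perfect packing: it means every red-blue bin is opened exactly once with \emph{all} its red leaving in that single move and no red ever moving again, so the $a_i$ split into $m$ groups whose sums fill the $m$ red bins to exactly $B$; by $B/4<a_i<B/2$ these groups have size $3$, i.e.\ a \textsc{3-Partition} solution. Consequently, when \textsc{3-Partition} is a no-instance we have $R \ge 3m+1$, and since both bounds are derived only from the red/blue flow and the capacity $B$ --- hence are insensitive to the number of empty bins --- every solution uses at least $R+W \ge (3m+1)+2m = 5m+1 > t$ moves.

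I expect the main obstacle to be precisely this lower-bound direction, and within it the equality condition $R=3m \Leftrightarrow$ ``\textsc{3-Partition} is a yes-instance'': I must rule out cleverer schemes that route red water through intermediate (including the many padding) empty bins, or that split a red block across several target bins, confirming that any imperfect packing genuinely costs at least one extra red-move. Once $R\ge 3m$ with the stated equality condition and $W\ge 2m$ are established independently of the empty-bin count, combining them with the $5m$-move canonical solution yields the reduction, and NP-completeness of the length-bounded problem over guaranteed yes-instances follows.
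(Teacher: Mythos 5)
Your proposal is correct and follows essentially the same route as the paper: pad the Theorem~\ref{th:NP-h} reduction with polynomially many empty bins to force a yes-instance, set $t=5m$, use the canonical $3$-red-moves-plus-$2$-blue-moves-per-triple solution for the forward direction, and for the converse count at least $2m$ blue moves (one per full blue bin) and at least $3m$ red moves (one per opened red-blue bin), with equality forcing each final red bin to be filled by exactly three whole red blocks, i.e.\ a \textsc{3-Partition} solution. The paper's proof is the same counting argument phrased directly (a length-$5m$ solution yields a partition) rather than contrapositively, so no substantive difference remains.
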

\begin{proof}
From an instance $\langle a_{1}, \dots, a_{3m} \rangle$ of \textsc{3-Partition},
we first construct an instance of $\WSP$ as described in the proof of \cref{th:NP-h},
and then add a sufficient number of empty bins to guarantee that the resultant instance is a yes-instance.
This is always possible with a polynomial number of empty bins as we see in \cref{sec:bins}.\footnote{%
It is not difficult to show that this instance only needs a constant number of bins. In fact, two empty bins are enough to perform a greedy algorithm for sorting.}
% Uehara memo: We can perform a kind of greedy algorithm for sorting with two empty bins.
% The proof is not so difficult, but it is not so straightforward...
Let $S$ denote the constructed instance. We set $t = 5m$

The proof of \cref{th:NP-h} implies that if $\langle a_{1}, \dots, a_{3m} \rangle$ is a yes-instance, 
then $S$ admits a sorting sequence of length $5m$. (See \figurename~\ref{fig:water}.)

Conversely, assume that $S$ admits a sorting sequence of length $5m$.
Recall that each of the $3m$ red-blue bins in $S$ contains $a_{i}$ red units at the top and $B - a_{i}$ blue units at the bottom for some $i$.
Since each full blue bin in the final configuration contains units from at least two original bins, we need at least one move for it.
Thus we need at least $2m$ moves to make $2m$ full blue bins.
This implies that we have at most $3m$ moves that involve red units.
Actually, the number of such moves is exactly $3m$
since each red unit has to move at least once.
Since $B/4 < a_{i} < B/2$ for all $i$,
each of the $m$ full red bins in the final configuration contains units from at least three original bins,
and thus it needs at least three moves.
If some red bin contains units from more than three original bins, then it needs at least four moves.
This contradicts the assumption that we have at most $3m$ moves for red units.
Thus we can conclude that each red bin in the final configuration contains units from exactly three original bins.
This gives a solution to the instance of \textsc{3-Partition} as the capacity of the bins is $B$.
\end{proof}

\section{Polynomial-time algorithms when $h=2$ and $|C| = n$}
\label{sec:h2}

In this section, we focus on the special case of $h=2$ and $|C| = n$. 
In popular apps, it is often the case that $h$ is a small constant and $|C| = n$.
The case $h=2$ is the first nontrivial case in this setting.
Under this setting, every ball-move is a water-move and vice versa, except for moving (a) unit(s) from a bin with two units of the same color to an empty bin, which is a vacuous move.
Therefore, in this section we do not distinguish water-moves and ball-moves and simply call them moves.
We prove that in this setting, all instances with $k \ge 2$ are yes-instances.
We also show that we can find
a shortest sorting sequence in $O(n)$ time (if any exists).

We say that a bin of capacity $2$ is a \emph{full bin} if it contains two units,
and a \emph{half bin} if it contains one unit.

\begin{figure}\centering
	\begin{tikzpicture}[scale=0.4,shape=circle,inner sep=0pt,minimum size=5mm,>=stealth]
	\tikzset{C1/.style={fill=red!25}}
	\tikzset{C2/.style={fill=green!25}}
	\tikzset{C3/.style={fill=blue!25}}
	\tikzset{C4/.style={fill=red!45}}
	\tikzset{C5/.style={fill=red!45!green!45}}
	\tikzset{C6/.style={fill=green!45}}
	\tikzset{C7/.style={fill=blue!45}}
	\tikzset{C8/.style={fill=red!45!blue!45}}
	\tikzset{C9/.style={fill=red!30!green!50}}
	\tikzset{C10/.style={fill=green!50!blue!50}}
	\tikzset{C11/.style={fill=green!20!blue!50}}
	\tikzset{C12/.style={fill=red!20!blue!50}}
	\tikzset{C13/.style={fill=red!60!green!20}}
	\tikzset{C14/.style={fill=red!60!blue!20}}
	\tikzset{C15/.style={fill=black!30}}
	\tikzset{C16/.style={fill=black!10}}
\newcommand{\qbin}[4]{%
	\begin{scope}[xshift=#1cm]
	\foreach \cl [%
 		count = \k,
 		] in {#3} {
 		\filldraw[C\cl] (0,\k) -- (0,\k-1) -- (1,\k-1) -- (1,\k) -- cycle;
		\draw (0.5,\k - 0.5) node {\cl};
 	}
	\draw[thick] (-0.1,#4+0.3) -- (0,#4+0.2) -- (0,0) -- (1,0) -- (1,#4+0.2) -- (1.1,#4+0.3); 
	\end{scope}
}
		\draw (-5,0) node {$\vec{G}(S)$:};
		\foreach \i in {1,...,3} [
		\draw[rotate=120*\i-120] (0,2) node[draw,C\i] (a\i) {\i};
		]
		\begin{scope}[xshift=6.5cm]
		\foreach \i in {4,...,8} [%
		\draw[rotate=72*\i+72] (0,2) node[draw,C\i] (a\i) {\i};
		]
		\end{scope}
		\begin{scope}[xshift=13cm]
		\foreach \i in {9,...,12} [%
		\draw[rotate=90*\i-45] (0,2) node[draw,C\i] (a\i) {\i};
		]
		\end{scope}
		\draw (18,1.5) node [draw,C13] (a13) {13};
		\draw (18,-1.5) node [draw,C14] (a14) {14};
		\draw (21,-1) node [draw,C15] (a15) {15};
		\draw (24,1) node [draw,C16] (a16) {16};
		\draw[->,thick] (a1) -- (a2);
		\draw[->,thick] (a2) -- (a3);
		\draw[->,thick] (a3) -- (a1);
		\draw[->,thick] (a4) -- (a5);
		\draw[->,thick] (a4) -- (a8);
		\draw[->,thick] (a6) -- (a5);
		\draw[->,thick] (a6) -- (a7);
		\draw[->,thick] (a7) -- (a8);
		\draw[->,thick] (a10) -- (a9);
		\draw[->,thick] (a10) -- (a11);
		\draw[->,thick] (a12) -- (a11);
		\draw[->,thick,out=240] (a13) to [in=120] (a14);
		\draw[->,thick,out=300] (a13) to [in=60] (a14);
		\draw[->,thick,out=120] (a15) to [in=180] (21,1)  [out=0] to [in=60] (a15);
		\begin{scope}[xshift=-2cm,yshift=-6cm]
		\draw (-3,1) node {$S$:};
		\qbin{0.0}{0}{1,2}{2}
		\qbin{1.5}{0}{2,3}{2}
		\qbin{3.0}{0}{3,1}{2}
		\qbin{4.5}{0}{4,5}{2}
		\qbin{6.0}{0}{6,5}{2}
		\qbin{7.5}{0}{6,7}{2}
		\qbin{9.0}{0}{7,8}{2}
		\qbin{10.5}{0}{4,8}{2}
		\qbin{12.0}{0}{9}{2}
		\qbin{13.5}{0}{10,9}{2}
		\qbin{15.0}{0}{10,11}{2}
		\qbin{16.5}{0}{12,11}{2}
		\qbin{18.0}{0}{12}{2}
		\qbin{19.5}{0}{13,14}{2}
		\qbin{21.0}{0}{13,14}{2}
		\qbin{22.5}{0}{15,15}{2}
		\qbin{24.0}{0}{16}{2}
		\qbin{25.5}{0}{16}{2}
		\end{scope}		
	\end{tikzpicture}
\caption{Configuration $S$ and its graph representation $\vec{G}(S)$.}
\label{fig:h2}
\end{figure}
\begin{theorem}
\label{th:k=2}
If $h=2$, $|C|=n$, and $k \ge 2$, then all instances of\/ $\WSP$ are yes-instances.
Moreover, a shortest sorting sequence can be found in $O(n)$ time.
\end{theorem}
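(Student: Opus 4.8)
The plan is to exploit the rigidity imposed by $|C|=n$ and $h=2$: since every color occurs a multiple of $h=2$ times and there are only $n$ colors for $2n$ units, \emph{every color occurs exactly twice}, so the goal configuration simply places each color in its own bin. First I would encode the initial configuration $S_0$ as the directed multigraph $\vec G(S_0)$ of Figure~\ref{fig:h2}: one vertex per color, and for each full bin with bottom color $x$ and top color $y$ an arc $x\to y$ (a loop when $x=y$, i.e.\ an already-sorted bin). As there are $n$ colors and $n$ full bins, $\vec G(S_0)$ has $n$ vertices and $n$ arcs and every vertex has total degree exactly $2$; hence its underlying undirected graph is a disjoint union of cycles, and I would handle each cycle-component independently. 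In a non-trivial component, a cycle of length $\ell$, I classify each vertex as a \emph{source} (out-degree $2$: both units sit at bin bottoms), a \emph{sink} (in-degree $2$), or \emph{through} (in- and out-degree $1$), and a short counting argument shows the number $s$ of sources equals the number of sinks.

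Next I would pin down the exact cost of one component. For the upper bound I would give an explicit strategy: repeatedly take a sink, pour its two top units together into one empty bin to form a monochromatic bin (cascading any through-colors lying below onto the bins just opened for them), and merge the two half-bins that a source leaves behind. Each of the $\ell$ borders is then removed by exactly one move, and exactly one extra \emph{merge} move is spent per source; when $s=0$ (a directed cycle) no bin can be opened without first parking one top unit in an auxiliary bin and later retrieving it, which costs one extra move. This sorts the component in $\ell+\max\{1,s\}$ moves. The decisive point for the bound $k\ge2$ is that at most two empty bins are ever occupied at once: opening a sink consumes one empty bin but exposes two half-bins, and because the two bins flanking a source become half-bins of that source's color, merging them returns an empty bin before a third is ever required.

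For the matching lower bound I would argue structurally and independently of $k$. All $\ell$ bichromatic bins must each lose their single border, and distinct borders require distinct moves; moreover each source color has its two units at the bottoms of two different bins, so they can only be united by a move pouring one onto the other, a move that removes no border and is distinct for distinct source colors. Finally a directed cycle has no source to terminate the cascade, so some top unit must be parked in and later recovered from an auxiliary bin, forcing one further move; through-colors, by contrast, are united for free by the same move that removes their border, so they force nothing extra. Summing over components, the minimum number of moves is $\sum_{\text{components}}\bigl(\ell_i+\max\{1,s_i\}\bigr)$, equivalently $D(S_0)$ plus the number of source colors plus the number of directed-cycle components, and since the realizing solutions never use more than two empty bins, every instance with $k\ge2$ is a yes-instance.

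All of this is computable in linear time: constructing $\vec G(S_0)$, extracting its cycle-components, and reading off each $\ell_i$ and $s_i$ take $O(n)$ time, the produced sequence has length $O(n)$, and it is emitted by a single traversal of each cycle. The step I expect to be the main obstacle is the careful scheduling argument showing the strategy never needs a third empty bin — that source merges can always be performed early enough around a cycle that still has through-vertices — together with the companion lower-bound claim that a directed cycle genuinely forces one non-border-removing move no matter how many empty bins are available.
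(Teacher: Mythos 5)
Your proposal is correct and takes essentially the same route as the paper's proof: the same directed multigraph $\vec{G}(S)$, the same optimal length (your $\sum_i(\ell_i+\max\{1,s_i\})$ equals the paper's potential $p_S+q_S+r_S$, since borders correspond to unsorted colors, sources to sinks, i.e., indegree-2 vertices, and directed-cycle components to $p_S$), and the same open-sink/cascade/merge-source strategy. The scheduling issue you flag is resolved in the paper by phrasing the upper bound as an induction on $p_S+q_S+r_S$ that maintains the invariant ``two empty bins, or one empty bin and two half bins,'' and its lower bound shows every move decreases this potential by at most one, which is equivalent to your charging argument.
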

\begin{proof}
For the first claim of the theorem, it is enough to consider the case $k=2$.
%Since the cases $n=1$ and $n=2$ are trivial, we assume that $n \ge 3$.
Under a configuration $S$, we say a color $c \in C$ is \emph{sorted} if the two units of color $c$ are in the same bin.
For a configuration $S$, we define a directed multigraph $\vec{G}(S)=(V,A)$ as follows (see \figurename~\ref{fig:h2}).
The vertex set $V$ is the color set $C$.
We add one directed edge from $c$ to $c'$ for each full bin that contains a unit of color $c$ at the bottom and a unit of color $c'$ at the top. That is, $A$ consists of $S(b)$ for all full bins $b$.
(We may add self-loops here.)
For the directed multigraph $\vec{G}(S)$, we denote its underlying multigraph by $G(S)=(V,E)$, where $E$ is a multiset obtained from  $A$ by ignoring the directions.
Since $|C|=n$, each color appears twice in $S$.
When $S$ consists of full bins, $G(S)$ is a 2-regular graph with $n$ edges, 
which means that $G(S)$ is a set of cycles. We call a self-loop a \emph{trivial cycle}.
If $S$ also contains half bins, then $G(S)$ is a disjoint union of cycles and paths.

Let $p_S$ denote the number of nontrivial directed cycles in $\vec{G}(S)=(V,A)$,
$q_S$ the number of vertices of indegree 2 in $\vec{G}(S)=(V,A)$, 
and $r_S$ the number of vertices with no self-loop, i.e., the number of unsorted colors.
We prove that if $S$ has either two empty bins or one empty and two half bins,
then a shortest sorting sequence has length $p_S+q_S+r_S$.
Clearly the initial configuration, which has two empty bins, satisfies the condition.

We first prove that there exists a sorting sequence of length $p_S+q_S+r_S$.
We use an induction on $p_S+q_S+r_S$.
When $p_S+q_S+r_S=0$, 
all colors are sorted and thus $S$ is a goal.
Now we turn to the inductive step, which consists of four cases.

\noindent
(Case 1) Suppose $S$ has no half bins and $q_S=0$.
Note that when $S$ has no half bins, it has two empty bins.
Then one can move the top unit of an arbitrary bin to an empty bin. 
This eliminates a directed cycle in the graph. 
Then we have one empty and two half bins. The claim follows from the induction hypothesis.

\noindent
(Case 2) Suppose $S$ has no half bins and there is a vertex $c$ whose indegree is 2.
By moving the two units of $c$ to an empty bin, we use two moves,
while decreasing both $q_S$ and $r_S$. 
Then we have one empty and two half bins. The claim follows from the induction hypothesis.

\noindent
(Case 3) Suppose $S$ has two half bins one of which has a color of outdegree 0. 
Then, the other unit of that color is not below another unit in some bin.
By moving that unit on top of the half bin, we can sort the color by one move.
This decreases $r_S$ by one, and does not change $q_S$.
After this move, still we have two half bins.
The claim follows from the induction hypothesis.
% Note that this case requires no empty bin.

\noindent
(Case 4) Otherwise, $S$ has two half bins $b$ and $b'$ and both colors in the half bins have outdegree 1.
Let $c_0$ be the color in $b$ and $(c_0, c_1, \ldots, c_m)$ the sequence of vertices such that $(c_{i-1}, c_{i})\in A$ for all $i$ where $c_m$ has outdegree 0.
By the assumption, $c_m$ is not the color of the unit of $b'$.
Therefore, $c_m$ has indegree 2, i.e., both units of $c_m$ appear as the top units of full bins.
We sort the color $c_m$ using the empty bin. 
It takes two moves and decreases both $r_S$ and $q_S$ by one. 
We then sort $c_{m-1},\ldots, c_0$ in this order by $m$ moves, which decreases $r_S$ by $m$, 
where we require no additional empty bins and finally the bin $b$ will be empty.
We have one empty and two half bins.
The claim follows from the induction hypothesis.

We next prove that there exists no sorting sequence of length less than $p_S+q_S+r_S$ with regardless of the number $k$ of empty bins.
Note that if all colors are sorted, $p_S+q_S+r_S=0$. 
We show that any move decreases the potential by at most one.

\noindent
(Case 1) If the color of the moved unit belongs to a directed cycle, this reduces $p_S$ by one but not $q_S$. 
The other unit of the same color has a unit of another color on its top. 
Therefore this cannot reduce $r_S$.

\noindent
(Case 2) If the color of the moved unit has indegree 2, 
this reduces $q_S$ by one but none of $p_S$ or $r_S$.

\noindent
(Case 3) Otherwise, any other kind of moves cannot reduce $p_S$ or $q_S$.
Clearly one move cannot reduce $r_S$ by two.

Since $p_S+q_S+r_S=O(n)$, we can sort any instance in $O(n)$ moves.
Each feasible move can be found in $O(1)$ time, which completes the proof.
%\RV{We need more details for the construction of $G$ and maintenance of it...}
%
%\todo{Having a sequence of length $O(n)$ is a corollary of a result in a previous section.}
\end{proof}

\begin{theorem}
\label{th:k=1}
If $h=2$, $|C|=n$, and $k=1$, then $\WSP$ can be solved in $O(n)$ time.
For a yes-instance, a shortest sorting sequence can be found in $O(n)$ time.
\end{theorem}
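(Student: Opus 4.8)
The plan is to reuse the directed multigraph $\vec{G}(S)$ and the potential $p_S+q_S+r_S$ from the proof of \cref{th:k=2}, but to exploit a much tighter space constraint. With $h=2$, $|C|=n$ and $k=1$ there are $2n$ units in $n+1$ bins, so the total free space is always exactly $2$; hence at every reachable configuration there is \emph{either} one empty bin and no half bin \emph{or} no empty bin and exactly two half bins. In particular, at most one bin is ever empty. Recall that when all bins are full $G(S)$ is a disjoint union of cycles, a self-loop being a sorted color; for a nontrivial cycle $Z$ let $s_Z$ denote its number of indegree-$2$ vertices (``sinks''), which equals its number of outdegree-$2$ vertices (``sources''). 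My proposed characterization is: $S$ is a yes-instance if and only if $s_Z\le 1$ for every nontrivial cycle $Z$, and in that case a shortest sorting sequence again has length $p_S+q_S+r_S$.

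For sufficiency I would sort the cycles one at a time, reusing the single empty bin. A directed cycle ($s_Z=0$) is opened by moving one top unit into the empty bin and then cascading along the resulting directed path, sorting its $m_Z$ colors and returning to a state with one empty bin, at a cost of $m_Z+1$ moves. A one-sink cycle ($s_Z=1$) is handled by first merging the two top units of its unique sink using the empty bin (two moves), after which the remainder is a directed path that cascades toward the unique source, where the last move merges two matching half bins and frees an empty bin again; this also costs $m_Z+1$ moves. Since $s_Z + [s_Z=0] = 1$ whenever $s_Z\le 1$, summing over components gives exactly $\sum_Z (m_Z+1) = r_S + p_S + q_S$ moves, matching the lower bound $p_S+q_S+r_S$ which, as noted in the proof of \cref{th:k=2}, holds regardless of the number of empty bins. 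This is essentially the case analysis of \cref{th:k=2} specialized to a single available empty bin, so each move is found in $O(1)$ time and the whole sequence in $O(n)$ time.

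The necessity direction is where I expect the real difficulty. Here I must show that a cycle $Z$ with $s_Z\ge 2$ can never be completely sorted, even in the presence of all other components. The intended argument rests on the ``at most one empty bin'' invariant: a sink color has both of its units capped as the tops of two distinct full bins, so the only way to unite them is to move one onto the single empty bin and the other on top of it, which consumes the unique empty bin. Once a sink of $Z$ is opened, the two bins below it become half bins showing its two cycle-neighbours, and---because sinks and sources strictly alternate around a cycle---these neighbours lead toward two \emph{different} sources; an empty bin is regenerated only when two matching half bins meet at a source, but reaching the second source first requires opening the second sink, for which no empty bin is available. To make this rigorous I would argue purely through the free-space invariant rather than through any fixed play order: since the two free slots are a single global resource, the quantity (sinks of $Z$ opened) $-$ (sources of $Z$ sorted) never exceeds the number of currently empty bins, which is at most one, and the cyclic alternation forces both sinks of $Z$ to be opened before either of the two sources can be sorted. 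The main obstacle is precisely to rule out \emph{all} interleavings and move orders---including those that partially process several components at once---which is exactly what the free-space-equals-two invariant is designed to control.
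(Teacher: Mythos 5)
Your route is the same as the paper's: the same multigraph $\vec{G}(S)$, the same characterization (yes-instance if and only if every nontrivial cycle has at most one indegree-$2$ vertex, equivalently at most one outdegree-$2$ vertex), the same cycle-by-cycle sorting at cost $m_Z+1$ moves per cycle, and the same optimal length (your $p_S+q_S+r_S$ equals the paper's $n+\ell$, and importing the lower bound from the potential argument of \cref{th:k=2}, which holds for every $k$, is legitimate). The sufficiency half and the move count are correct.

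The gap is in the necessity half, exactly where you anticipated, and the two lemmas you propose to hang it on are false (or vacuous) as stated. First, sorting a source does \emph{not} require opening both adjacent sinks: a cascade can be seeded by moving any mid-path top unit into the empty bin; e.g., on the path $v_1\to x_1\to x_2\to u_1$, moving the unit of $x_1$ into the empty bin exposes a unit of $v_1$ without touching $u_1$, and more generally each of the two paths out of a source merely needs to be \emph{seeded} by a move into an empty bin, not headed by an opened sink. (In a bad cycle your claim ends up vacuously true only because no source is ever sorted, which is precisely what you are trying to prove, so it cannot serve as a stepping stone.) Second, your invariant ``(sinks of $Z$ opened) $-$ (sources of $Z$ sorted) never exceeds the number of currently empty bins'' is violated by legal play at once: immediately after a sink is opened, the left side is $1$ and the right side is $0$. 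The argument that does work---and that also makes the paper's terse ``we eventually get stuck'' rigorous---is confinement plus the backward-cascade structure: since free space is always exactly two slots, every move from the one-empty-bin state splits a single edge of a single cycle $Z$ and leaves no empty bin; from then on, every legal move puts a unit of a color of $Z$ onto one of the two half bins (units of other cycles sit in full bins and cannot move, and sorted bins are frozen), so play is confined to $Z$ until the two half bins show the same color and merge. The exposed colors move only backward (head to tail) along edges, i.e., toward sources; if $Z$ has two sinks, then after the seed the two exposed units always lie on internally disjoint source-to-sink paths heading toward \emph{distinct} sources, or one of them is a permanently stuck non-sink head, so they never coincide, the empty bin is never regenerated, and $Z$---whose two exposed colors are always unsorted---can never be completed. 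Your free-space intuition is the right tool, but the specific claims you reduce to do not hold and must be replaced by this seeding/confinement argument.
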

\begin{proof}
For a configuration $S$, we again use the directed multigraph $\vec{G}(S)=(V,A)$
and its underlying multigraph by $G(S)=(V,E)$ defined in the proof of \cref{th:k=2}.
Let $S_0$ be a given instance of $\WSP$ with $h=2$, $k=1$, and $|C|=n$.
To simplify, we assume that $S_0$ has no full bin that contains two units of the same color without loss of generality.
(Hereafter, when we have a full bin that contains two units of the same color, we remove it from the configuration and never touch.
In terms of $\vec{G}(S)$, $\vec{G}(S)$ has no self-loop, 
and once $\vec{G}(S)$ produces a vertex with a self-loop, we remove it from $\vec{G}$.)
Observe that if a configuration $S$ consists of $n'$ full bins and one empty bin,
then $G(S)$ is a 2-regular multigraph with $n'$ edges.
% added by R. Uehara 20220216...
That is, $G(S)$ is a set of cycles of size at least 2 since we remove vertices with self-loops.

Now we focus on a cycle $C'$ in $G$. When $C'$ is a directed cycle in $\vec{G}$ (as $C_3$ in \figurename~\ref{fig:h2}),
it is easy to remove the vertices in $C'$ from $G$ by using one empty bin.
When $C'$ contains exactly one vertex $v$ of outdegree $2$ in $\vec{G}$, $C'$ contains another vertex $u$ of
indegree $2$. In this case, we first move two units of color $u$ to the empty bin, and 
we can sort the other bins using two half bins. Lastly, we can get two units of color $v$ together,
and obtain an empty bin again.
Now we consider the case that $C'$ contains (at least) two vertices $v$ and $v'$ of outdegree $2$ in $\vec{G}$ 
(as $C_5$ in \figurename~\ref{fig:h2}).
In this case, we can observe that we eventually get stuck with two half bins of colors $v$ and $v'$.
%\notsure{
That is, $S$ is a no-instance if and only if $\vec{G}(S)$ contains at least one cycle $C'$ that
has at least two vertices of outdegree $2$ in $\vec{G}(S)$.%}

In summary, we can conclude that the original instance $S_0$ can be sorted if and only if 
every cycle in $G(S_0)$ contains at most one vertex of outdegree $2$ in $\vec{G}(S_0)$.
The construction of $\vec{G}(S_0)$ and $G(S_0)$ can be done in $O(n)$ time,
and this condition can be checked in $O(n)$ time.
Moreover, it is easy to observe that sorting items in each cycle of length $n'$ in $G(S_0)$ requires $n'+1$ moves,
and $n'+1$ moves are sufficient.
Therefore, the optimal sorting requires $n+\ell$ moves, 
where $\ell$ is the number of cycles in $G(S_0)$, which can be computed in $O(n)$ time.
\end{proof}

By the proofs of Theorems \ref{th:k=2} and \ref{th:k=1}, we have the following corollary:
\begin{corollary}
\label{cor:k<=2}
If $h=2$ and $|C|=n$, any yes-instance of\/ $\WSP$ has a solution of length $O(n)$.
\end{corollary}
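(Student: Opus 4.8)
The plan is to perform a case analysis on the number $k$ of empty bins, invoking the two preceding theorems for $k \ge 1$ and dispatching the degenerate case $k = 0$ directly. In each case the goal is simply to read off the explicit length bound already produced inside the relevant proof and observe that it is $O(n)$.

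First, for $k \ge 2$, Theorem~\ref{th:k=2} guarantees not only that the instance is a yes-instance but also that a shortest sorting sequence has length exactly $p_S + q_S + r_S$. Each of these three quantities is bounded by the number of colors: $p_S$ counts nontrivial directed cycles, $q_S$ counts vertices of indegree $2$, and $r_S$ counts unsorted colors, and all three live on the vertex set $V = C$ with $|C| = n$. Hence the length is at most $3n = O(n)$. Second, for $k = 1$, I would appeal to the length bound established in the proof of Theorem~\ref{th:k=1}: whenever the instance is a yes-instance, the optimal number of moves is $n + \ell$, where $\ell$ is the number of cycles of $G(S_0)$. Since these cycles are vertex-disjoint and each has size at least $2$, we have $\ell \le n/2$, so the length is again $O(n)$.

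Finally, the case $k = 0$ is covered by neither theorem, so I would dispose of it separately. With $h = 2$, every one of the $n$ bins in the initial configuration is full, leaving no margin at the top of any bin and no empty bin to receive a unit; hence no water-move is applicable at all. Consequently a configuration with $k = 0$ is a yes-instance only when it is already sorted, in which case the empty sequence of length $0$ witnesses the claim.

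There is no serious obstacle here, since the corollary merely extracts and unifies length bounds that are already present inside the two proofs. The only points requiring care are the completeness of the case distinction---in particular remembering that neither theorem addresses $k = 0$, and that $k \ge hn = 2n$ trivializes nothing new under $h = 2$ since $k \ge 2$ already suffices---and recording the elementary counting estimates ($p_S, q_S, r_S \le n$ and $\ell \le n/2$) that convert the exact lengths $p_S + q_S + r_S$ and $n + \ell$ into the uniform bound $O(n)$.
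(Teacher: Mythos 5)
Your proposal is correct and matches the paper's own argument, which likewise derives the corollary by reading off the explicit move counts ($p_S+q_S+r_S$ for $k\ge 2$ and $n+\ell$ for $k=1$) from the proofs of Theorems~\ref{th:k=2} and~\ref{th:k=1}. Your separate treatment of $k=0$ (where no move is possible, so a yes-instance is already sorted) is a small completeness point the paper leaves implicit, but it does not change the approach.
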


\section{The number of sufficient bins}
\label{sec:bins}

In this section, we consider the minimum number $k(n,h)$ for $n$ and $h$
such that all instances of $\WSP$ with $n$ full bins of capacity $h$ with $k(n,h)$ empty bins are yes-instances.
We can easily see that such a number exists and that $k(n,h)\le n$ as follows.
Let $S$ be an instance with $n$ full and $k$ empty bins of capacity $h$ such that $k \ge n$.
For each color $c$ used in $S$, let $j_{c}$ be the positive integer such that $S$ contains $j_{c} \cdot h$ units of color $c$.
We assign $j_{c}$ empty bins to each color $c$.
Since $\sum_{c} j_{c} = n$, we can easily ``bucket sort'' $S$ by moving water units in the $n$ bins to empty bins
following the assignment of the empty bins to the colors.
On the other hand, as shown in \cref{th:k=1}, we have no-instances when $k=1$ even if $h=2$ 
(see \cref{sec:conc} for a larger no-instance).

In the following, we improve the upper bound of $k(n,h)$ and show a lower bound.

\newcommand{\heightis}[4]{% x,y_1,y_2,label
	\draw[<-] (#1,#2) -- (#1,0.5*#2+0.5*#3-0.5);
	\draw[<-] (#1,#3) -- (#1,0.5*#2+0.5*#3+0.5);
	\draw (#1, 0.5*#2+0.5*#3) node {#4};
}
\newcommand{\widthis}[4]{% x_1,x_2,y,label
	\draw[<-] (#1,#3) -- (0.5*#1+0.5*#2-0.45,#3);
	\draw[<-] (#2,#3) -- (0.5*#1+0.5*#2+0.45,#3);
	\draw (0.5*#1+0.5*#2, #3) node {#4};
}
\begin{figure}\centering
 \begin{tikzpicture}[scale=0.6,xscale=1]
 	\fill[red!8] (8,0) -- (8,4) -- (10,4) -- (10,0);
 	\fill[red!8] (8,0) -- (8,0.5) -- (16,0.5) -- (16,0);
 	\draw (0,0) -- (0,4) -- (10,4) -- (10,0);
 	\draw (0,0) -- (16,0);
 	\draw[style=dotted] (8,0) -- (8,0.5) -- (16,0.5) -- (16,0);
 	\draw[color=red, out=0, in=90, ->] (9,3) to (13,0.5); 
 	\heightis{-0.5}{0}{4}{$h$}
 	\widthis{0}{10}{-0.5}{$n$}
 	\widthis{10}{16}{-0.5}{$k$}
 	\widthis{8}{10}{4.5}{$j$}
 \end{tikzpicture}
\caption{Refinement of the naive bucket sort.} 
\label{fig:sufficient}
\end{figure}
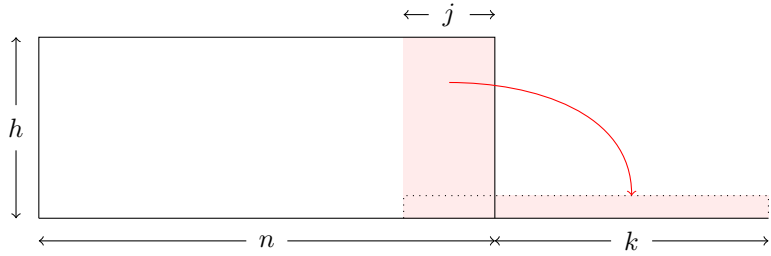

\begin{theorem}
$k(n,h) \le \ceil{\frac{h-1}{h}n}$.
%For any given $n$, $h$, and $C \le n$, any instance of the water sort puzzle has a solution if $k$ is at least $\ceil{\frac{h-1}{h}n}$.
\end{theorem}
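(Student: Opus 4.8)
We need to show that $k(n,h) \le \lceil \frac{h-1}{h} n \rceil$ empty bins suffice to sort any instance with $n$ full bins of capacity $h$. The naive bucket sort uses $n$ empty bins (one per unit's worth of destination, $\sum_c j_c = n$). The improvement from $n$ down to $\lceil \frac{h-1}{h} n \rceil$ amounts to saving roughly $n/h$ bins. Looking at Figure~\ref{fig:sufficient}, the idea is clear: instead of reserving a fresh empty bin for every monochromatic bin we build, we should let some of the destination bins overlap with the region occupied by source bins. Concretely, the savings come from the fact that as we empty out source bins during the sort, those bins become available as fresh empty bins, so we do not need to pre-allocate the full $n$ empties.

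Let me think about the precise bound and the mechanism.
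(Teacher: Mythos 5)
There is a genuine gap: your proposal stops before any proof begins. After restating the claim and the target bound, you offer only the sentence that the savings ``come from the fact that as we empty out source bins during the sort, those bins become available as fresh empty bins,'' and then end with ``Let me think about the precise bound and the mechanism.'' No mechanism is given, no counting is done, and nothing is verified -- in particular you never show how to realize a saving of exactly $\lfloor n/h \rfloor$ bins, which is what the bound $\lceil \frac{h-1}{h}n \rceil = n - \lfloor n/h \rfloor$ demands.

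Moreover, the intuition you do state is not obviously workable and is not the paper's mechanism. Recycling source bins ``as they empty out during the sort'' is delicate: a unit may only be poured onto a matching color or into an empty bin, so if you start a bucket sort with too few pre-allocated destinations you can get stuck with every bin topped by a mismatched color. The paper's proof avoids this by a two-phase argument: first, use the $k = \lceil \frac{h-1}{h}n \rceil$ empty bins to make $j = \lfloor n/h \rfloor$ of the full bins monochrome (this is possible since $k \ge (h-1)j$: each full bin is made monochrome by distributing its at most $h-1$ offending top units into empty bins); at that point there are $k + j = n$ monochrome bins, which -- after merging equal colors -- can be dedicated one per color-slot and used as the buckets for the naive bucket sort of the remaining full bins. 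Your proposal would need this (or some equally concrete) construction, together with the arithmetic check $k \ge (h-1)\lfloor n/h \rfloor$ and $k + j = n$, to count as a proof.
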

\begin{proof}
It suffices to show the lemma for the case where $n=|C|$.
We refine the bucket-sort base algorithm described above.
See \cref{fig:sufficient}.
Suppose that we have at least $k=\ceil{\frac{h-1}{h}n}$ empty bins.
Then, using those empty bins, one can make $j=\floor{\frac{n}{h}}$ other bins monochrome, since $k \ge (h-1)j$.
Now we have $k+j=n$ monochrome bins.
When some of those bins have the same color, we merge them.
Then we can dedicate different $n$ bins to pairwise different colors,
and we can use them to sort the remaining bins by the bucket sort.
\end{proof}
We note that $\ceil{\frac{h-1}{h}n}=n$ when $h>n$, which coincides with the naive bucket sort algorithm.

\begin{theorem}\label{thm:sufbin_lbound}
$k(n,h) \ge \ceil{\frac{19}{64} \min\{n,h\}}$.

%	There is an unsolvable instance with $(19/64)\min(n,h)-1$ empty bins where $n=C$ is the number of full bins and colors and $h$ is the capacity of bins for any big enough $n$ and $h$.
\end{theorem}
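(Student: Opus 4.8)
The plan is to construct, for suitable values of $n$ and $h$, an explicit instance of $\WSP$ that cannot be sorted unless at least $\ceil{\frac{19}{64}\min\{n,h\}}$ empty bins are available. Since the bound is in terms of $\min\{n,h\}$, I would first reduce to the regime where $n$ and $h$ are comparable (or build a family parametrized by $\min\{n,h\}$) and then exhibit a hard configuration there. The natural idea is to use the top-border machinery from Section~\ref{sec:equiv}: by Theorem~\ref{thm:movable_condition}, solvability is governed entirely by whether the top-border table can be reduced to the all-zero table through a sequence of legal $\tbmove{}$ steps, and each such step requires the consistency inequality $\sum_{c} \mcal{M}_c^{b}(\tau) \le |B_\varepsilon(\tau)|$ to hold. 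So the lower bound amounts to designing an instance whose structure forces many colors to simultaneously demand spare monochrome (empty) bins before any border can be removed.

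Concretely, I would design the $n$ filled bins so that their color layering creates a \emph{bottleneck}: to open up (remove the top border of) any single bin, one must first have deposited a large combined volume of various colors into auxiliary bins, and because of the ceiling terms in $\mcal{M}_c(\topborder{S})$, the rounding losses across many colors accumulate. The constant $\frac{19}{64}$ strongly suggests an optimization over a fractional parameter: one picks a family of bins whose color-block heights are tuned (as fractions of $h$) so that the per-color wasted capacity, summed via the $\lceil\cdot\rceil$ in $\mcal{M}_c$, is maximized. I would set up the counting so that $\mcal{F}_c$ and $\mcal{G}_c$ are balanced to force each color's contribution to round up unfavorably, and then choose the fractional block sizes to extremize the resulting lower bound, with $\frac{19}{64}$ emerging as the optimal value of that extremal problem.

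The key steps, in order: (i) specify the instance — how many bins of each color pattern, and the height of each color block, in terms of a parameter; (ii) argue via Theorem~\ref{thm:movable_condition} (the necessary direction, i.e.\ $S_0 \ballmove^* S$ forces consistency of $\topborder{S}$) that in \emph{any} partial reconfiguration the number of empty/monochrome bins consumed is bounded below by $\sum_c \mcal{M}_c$; (iii) evaluate this sum for the constructed instance and show it forces $k \ge \ceil{\frac{19}{64}\min\{n,h\}}$; and (iv) optimize the parameters to pin down the constant. I would verify the first (``no border can be removed with too few empty bins'') direction rigorously, since that is exactly the obstruction captured by the $\tbmove{}$ relation.

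The hard part will be step (iv) together with establishing the sharp constant: unlike the upper-bound proof, which is a clean bucket-sort argument, the lower bound requires choosing the instance so that the rounding inefficiencies in the $\mcal{M}_c$ terms stack up to a provable fraction of $\min\{n,h\}$, and then proving that \emph{no} ordering of border removals can ever escape the consistency inequality with fewer empty bins. Establishing that the bound holds for \emph{every} reconfiguration order — not merely the naive one — is the crux; I expect this to reduce to a combinatorial/LP-type argument showing the $\frac{19}{64}$ threshold is forced by the worst-case interaction among colors, and getting the exact constant (rather than merely some positive fraction) is where the delicate case analysis will concentrate.
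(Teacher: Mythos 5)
Your proposal has the right outer shell---exhibit an explicit hard instance and show that \emph{no} order of moves can sort it with too few empty bins---but everything inside the shell is missing, and that interior is the entire content of the theorem. You never specify an instance, never derive a single inequality, and your step (iv) explicitly defers ``establishing the sharp constant'' to a case analysis you do not carry out. Moreover, your guess about the mechanism is off in two ways. First, the paper does not route the lower bound through the machinery of Section~\ref{sec:equiv} at all (no $\tbmove{}$, no $\mcal{M}_c$); the argument is elementary and self-contained. Second, the constant $\frac{19}{64}$ does not arise from an LP-type optimization over fractional block heights or from accumulated rounding losses in ceilings; it equals $1-\frac{45}{64}$ and falls out of iterating a simple recurrence four times, applied to the most symmetric instance imaginable. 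A single global inequality of the form $\sum_c \mcal{M}_c(\tau) \le |B_\varepsilon(\tau)|$, which is what your step (ii) proposes to use, is also too coarse: the paper needs one capacity constraint \emph{per pair of consecutive colors}, not one summed inequality over all colors.

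Concretely, the paper takes $h=n$ and lets every one of the $n$ full bins have identical contents $(1,2,\dots,n)$ from bottom to top (for $n>h$ add already-sorted bins; for $n<h$ pad bottoms monochromatically). It then freezes the configuration at the first moment a unit of color $1$ moves. Since each color has exactly $h$ units in total, any initially empty bin is monochrome at all times, so the set $U$ of colors occupying initially empty bins satisfies $|U|\le k$. Writing $r_c$ for the number of color-$c$ units removed from their initial positions at that moment, one has $r_1=1$ and $r_{c+1}\ge r_c$. The key counting step is: if $c\notin U$, every removed color-$c$ unit must sit on top of the original color-$c$ unit of some initially full bin; such a bin is usable only if its color-$(c+1)$ unit has been removed but its color-$c$ unit has not (there are $r_{c+1}-r_c$ such bins), and each accommodates at most $n-c$ extra units. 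Hence $r_c \le (r_{c+1}-r_c)(n-c)$, i.e., $r_{c+1}-r_c \ge \lrceil{r_c/(n-c)}$. After normalizing $U=\{n-k+1,\dots,n\}$, iterating this in stages---the increment is $\ge 1$ always, $\ge 2$ once $c>n/2$, $\ge 3$ once $c>\frac{5}{8}n$, $\ge 4$ once $c>\frac{11}{16}n$---and using $r_c\le n$ forces $n-k\le \frac{45}{64}n$, i.e., $k\ge\frac{19}{64}n$. Any completed version of your plan would need, at minimum, a concrete instance and a per-color chain of constraints of this kind valid for every move order; neither appears in your write-up, so as it stands there is a genuine gap rather than an alternative proof.
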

\begin{proof}
We show a configuration of $h=n$ bins where every bin has the same contents {$(1,\dots,n)$}. 
In the case where $n > h$, one can add $n-h$ extra bins whose contents are already sorted.
In the case where $n < h$, one can arbitrarily pad the non-empty bins of the above configuration with $h-n$ extra units of each color at the bottom.

\begin{figure}\centering
\begin{tikzpicture}[scale=0.25,yscale=-1.1]
	\newcommand{\Lshp}[4]{%
		\filldraw[fill=black!8, draw=black] (0,0) -- (#1-1,0) -- (#1-1,#3) -- (#1,#3) -- (#1,#2) -- (0,#2) -- (0,0);
		\draw (#1-1,0) -- (#1,0);
		\draw (2,#2-1) node {#4};
	}
	\newcommand{\Lshape}[5]{%
		\filldraw[fill=black!8, draw=black] (0,#2) -- (#1,#2) -- (#1,0) -- (#1+#3-1,0) -- (#1+#3-1,#4) -- (#1+#3,#4) -- (#1+#3,#2+1) -- (0,#2+1) -- (0,#2);
		\draw (#1+#3-1,0) -- (#1+#3,0) -- (#1+#3,#4);
		\draw (2,#2+0.5) node {#5};
	}
	\newcommand{\Fullbin}[2]{%
		\filldraw[fill=black!8, draw=black] (#1,0) -- (#1+1,0) -- (#1+1,25) -- (#1,25) -- (#1,0);
		\draw (#1+0.5,24) node {#2};
	}
	\draw[style=densely dashed] (40,25) -- (0,25) -- (0,0) -- (40,0);
	\draw (36,0) -- (37,0);
	\draw (36,25) -- (37,25);
	\Lshp{4}{9}{5}{450}
	\Lshape{4}{9}{4}{6}{449}
	\Lshape{11}{12}{4}{11}{441}
	\Lshape{15}{13}{3}{1}{440}
	\Lshape{21}{16}{3}{15}{401}
	\Lshape{24}{17}{2}{1}{400}
	\Lshape{29}{20}{2}{19}{321}
	\Lshape{31}{21}{1}{1}{320}
	\Lshape{35}{24}{1}{23}{1}
	\Fullbin{37}{451}
	\Fullbin{40}{640}
	\widthis{0}{4}{-1}{4}
	\widthis{4}{8}{-1}{4}
	\widthis{11}{15}{-1}{4}
	\widthis{15}{18}{-1}{3}
	\widthis{21}{24}{-1}{3}
	\widthis{24}{26}{-1}{2}
	\widthis{29}{31}{-1}{2}
	\widthis{31}{32}{-1}{1}
	\widthis{35}{36}{-1}{1}
	\widthis{36}{37}{-1}{1}
	\widthis{37}{38}{-1}{1}
	\widthis{40}{41}{-1}{1}
	\heightis{-1}{0}{9}{$191$}
	\heightis{-2}{0}{25}{$640$}
	\draw[<-] (37,25.5) -- (38,25.5);
	\draw[<-] (41,25.5) -- (40,25.5);
	\draw (39, 25.5) node {190};
%	\widthis{37}{41}{25.5}{}
%	\draw(39,26.5) node {$190$};
\end{tikzpicture}
\caption{When the first unit of color $1$ has been moved, where $n=640$ and $k=190$.} 
\label{fig:sufbin_lbound}
\end{figure}

Suppose that it is solvable with $k$ empty bins.
We consider the very first moment when a unit of color $1$ has been moved.
Figure~\ref{fig:sufbin_lbound} shows an example configuration with $n=640$ and $k=190$.
Note that any initially empty bin will always be monochrome.
Let $U$ be the set of colors $c$ such that at least one unit of the color $c$ occupies a bin which was initially empty, where $|U| \le k$,
and $r_c$ be the number of units of color $c$ which have been removed from the initial location.
Particularly $r_1=1$.
Then we have $r_{c+1} \ge r_{c}$ for all colors $c \ge 1$, since a unit of color $c$ can be removed only after all the units above have been removed.
Moreover, if $c \notin U$, units of color $c$ which can be put only on units of the same color.
That is, from both the source and target bins involved in the move, the unit of color $c+1$ must have been removed.
Each of the target bins accepts at most $(n-c)$ units of water color $c$ on top of the unit initially located there.
Therefore, for $c \notin U$, $r_c \le (r_{c+1}-r_c)(n-c)$ (see Figure~\ref{fig:sufbin_ineq}), i.e.,
\begin{equation}\label{eq:sufbin_lbound}
	r_{c+1}-r_c \ge \lrceil{\frac{r_c}{n-c}}\,.
\end{equation}
We will give a lower bound of the size of $U$ that admits a sequence of integers $(r_1,\dots,r_n)$ with $1 \le r_1 \le \dots \le r_n \le n$ that satisfies Eq.~(\ref{eq:sufbin_lbound}) for $c \notin U$.
Suppose that $(r_1,\dots,r_n)$ satisfies the condition with  $c \in U$ and $c_{c+1} \notin U$. 
Then, one can easily see that $U'=U\setminus\{c\}\cup\{c+1\}$ admits a sequence $(r_1,\dots,r_{c-1},r_{c}',r_{c+1},r_n)$ with $r_c'=r_{c+1}$ that satisfies Eq.~(\ref{eq:sufbin_lbound}).
Therefore, we may and will assume that $U=\{n-k+1,\dots,n\}$.
\begin{figure}\centering
 \begin{tikzpicture}[scale=0.6,xscale=1]
 	\fill[red!8] (8,0.5) -- (8,4) -- (10,4) -- (10,0.5);
 	\fill[red!8] (10,0) -- (10,0.5) -- (16,0.5) -- (16,0);
 	\draw (4,0) -- (4,4) -- (8,4) -- (8,0.5);
 	\draw (4,0) -- (16,0) -- (16,0.5) -- (4,0.5);
 	\draw (8,4) -- (10,4) -- (10,0);
 	\draw (4,1) -- (8,1);
 	\draw[color=red, out=0, in=90, <-] (9.5,3.5) to (13,0.5); 
 	\heightis{9}{0.5}{4}{$n-c$}
 	\widthis{8}{16}{-1.0}{$r_{c+1}$}
 	\widthis{10}{16}{-0.5}{$r_c$}
	\draw (5, 0.75) node {$c+1$};
	\draw (5, 0.25) node {$c$};
 \end{tikzpicture}
\caption{If $c \notin U$, $r_c \le (r_{c+1}-r_c)(n-c)$.} 
\label{fig:sufbin_ineq}
\end{figure}

For $c \le n-k$, $r_1 = 1$ and $r_{c+1} - r_c \ge 1$ implies $r_c \ge c$.
Hence, for $n/2 < c \le n-k$,
\[
	r_{c+1}-r_c \ge \lrceil{\frac{r_c}{n-c}} > \frac{n/2}{n-n/2} = 1
\]
implies $r_c \ge 2c - n/2$.
Hence, for $(5/8)n < c \le n-k$,
\[
	r_{c+1}-r_c \ge \lrceil{\frac{r_c}{n-c}} > \frac{(5/4)n-n/2}{n-(5/8)n} = 2
\]
implies $r_c \ge 3c - (9/8)n$.
Hence, for $(11/16)n < c \le n-k$,
\[
	r_{c+1}-r_c \ge \lrceil{\frac{r_c}{n-c}} > \frac{(33/16)n-(9/8)n}{n-(11/16)n} = 3
\]
implies $r_c \ge 4c - (29/16)n$.
On the other hand, $r_c \le n$.
That is, $c \le n-k$ implies $c \le (45/64)n$, i.e., $k \ge (19/64)n$.
%We remark that there is no $c < n-k \le n-(19/64)n = (45/64)n$ such that $\ceil{\frac{4c - (29/16)n}{n-c}} \ge 5$.
\end{proof}

\section{Concluding remarks}
\label{sec:conc}

In this paper, we investigate the problems of solvability of ball and water sort puzzles.
We show that both problems are equivalent and NP-complete.
We also show that even for trivial instances of the water sort puzzle,
it is NP-complete to find a shortest sorting sequence.

\begin{figure}\centering
	\newcommand{\qbin}[4]{%
	\tikzset{C1/.style={fill=red!15}}
	\tikzset{C2/.style={fill=green!15}}
	\tikzset{C3/.style={fill=blue!15}}
	\tikzset{C4/.style={fill=red!45}}
	\tikzset{C5/.style={fill=green!45}}
	\tikzset{C6/.style={fill=blue!45}}
	\tikzset{C7/.style={fill=red!50!green!50}}
	\tikzset{C8/.style={fill=green!50!blue!50}}
	\tikzset{C9/.style={fill=blue!50!red!50}}
	\begin{scope}[xshift=#1cm]
	\foreach \cl [%
 		count = \k,
 		] in {#3} {
 		\filldraw[C\cl] (0,\k) -- (0,\k-1) -- (1,\k-1) -- (1,\k) -- cycle;
		\draw (0.5,\k - 0.5) node {\cl};
%		\draw (0,\k) -- (1,\k);
 	}
	\draw[thick] (-0.1,#4+0.3) -- (0,#4+0.2) -- (0,0) -- (1,0) -- (1,#4+0.2) -- (1.1,#4+0.3); 
	\end{scope}
}
	\begin{tikzpicture}[scale=0.4]
		\qbin{0.0}{0}{7,4,1}{3}
		\qbin{1.5}{0}{8,4,1}{3}
		\qbin{3.0}{0}{9,4,1}{3}
		\qbin{4.5}{0}{7,5,2}{3}
		\qbin{6.0}{0}{8,5,2}{3}
		\qbin{7.5}{0}{9,5,2}{3}
		\qbin{9.0}{0}{7,6,3}{3}
		\qbin{10.5}{0}{8,6,3}{3}
		\qbin{12.0}{0}{9,6,3}{3}
		\qbin{13.5}{0}{}{3}
		\qbin{15}{0}{}{3}
		\draw (8,-1) node {(a)};
	\end{tikzpicture}
\\
\renewcommand{\qbin}[4]{%
	\tikzset{C1/.style={fill=red!30}}
	\tikzset{C2/.style={fill=red!50}}
	\tikzset{C3/.style={fill=red!70}}
	\tikzset{C4/.style={fill=green!15}}
	\tikzset{C5/.style={fill=green!30}}
	\tikzset{C6/.style={fill=green!45}}
	\tikzset{C7/.style={fill=green!60}}
	\tikzset{C8/.style={fill=green!75}}
	\tikzset{C9/.style={fill=green!90}}
	\tikzset{C10/.style={fill=blue!30}}
	\tikzset{C11/.style={fill=blue!50}}
	\tikzset{C12/.style={fill=blue!70}}
	\begin{scope}[xshift=#1cm]
	\foreach \cl [%
 		count = \k,
 		] in {#3} {
 		\filldraw[C\cl] (0,\k) -- (0,\k-1) -- (1,\k-1) -- (1,\k) -- cycle;
		\draw (0.5,\k - 0.5) node {\cl};
%		\draw (0,\k) -- (1,\k);
 	}
	\draw[thick] (-0.1,#4+0.3) -- (0,#4+0.2) -- (0,0) -- (1,0) -- (1,#4+0.2) -- (1.1,#4+0.3); 
	\end{scope}
}
	\begin{tikzpicture}[scale=0.4]
		\qbin{0.0}{0}{10,4,4,1}{4}
		\qbin{1.5}{0}{10,5,5,1}{4}
		\qbin{3.0}{0}{11,6,6,1}{4}
		\qbin{4.5}{0}{11,7,7,1}{4}
		\qbin{6.0}{0}{11,8,8,2}{4}
		\qbin{7.5}{0}{11,9,9,2}{4}
		\qbin{9.0}{0}{12,4,4,2}{4}
		\qbin{10.5}{0}{12,5,5,2}{4}
		\qbin{12.0}{0}{12,6,6,3}{4}
		\qbin{13.5}{0}{12,7,7,3}{4}
		\qbin{15.0}{0}{10,8,8,3}{4}
		\qbin{16.5}{0}{10,9,9,3}{4}
		\qbin{18.0}{0}{}{4}
		\qbin{19.5}{0}{}{4}
		\qbin{21.0}{0}{}{4}
		\draw (11,-1) node {(b)};
	\end{tikzpicture}
%\newcommand{\qbin}[4]{%
%	\tikzset{C1/.style={fill=red!30}}
%	\tikzset{C2/.style={fill=green!30}}
%	\tikzset{C3/.style={fill=blue!30}}
%	\tikzset{C4/.style={fill=red!50!green!50}}
%	\tikzset{C5/.style={fill=green!50!blue!50}}
%	\tikzset{C6/.style={fill=blue!50!red!50}}
%	\begin{scope}[xshift=#1cm]
%	\foreach \cl [%
% 		count = \k,
% 		] in {#3} {
% 		\filldraw[C\cl] (0,\k) -- (0,\k-1) -- (1,\k-1) -- (1,\k) -- cycle;
%		\draw (0.5,\k - 0.5) node {\cl};
%%		\draw (0,\k) -- (1,\k);
% 	}
%	\draw[thick] (-0.1,#4+0.3) -- (0,#4+0.2) -- (0,0) -- (1,0) -- (1,#4+0.2) -- (1.1,#4+0.3); 
%	\end{scope}
%}
%	\begin{tikzpicture}[scale=0.4]
%		\qbin{0.0}{0}{2,4,4,1}{4}
%		\qbin{1.5}{0}{2,5,5,1}{4}
%		\qbin{3.0}{0}{3,6,6,1}{4}
%		\qbin{4.5}{0}{3,4,4,1}{4}
%		\qbin{6.0}{0}{3,5,5,2}{4}
%		\qbin{7.5}{0}{3,6,6,2}{4}
%		\qbin{9}{0}{}{4}
%		\qbin{10.5}{0}{}{4}
%		\draw (6,-1) node {(a)};
%	\end{tikzpicture}
\caption{No-instances for (a) $h=3$, $k=2$, $n=9$ and (b) $h=4$, $k=3$, $n=12$.}
\label{fig:no_k2}
\end{figure}

As discussed in Section \ref{sec:bins}, any instance can be sorted when $k$ is large enough.
Especially, as discussed in Section \ref{sec:h2}, $k=2$ is enough for $h=2$, while we have a no-instance for $k=1$ and $h=2$.
On the other hand, there exist no-instances with $k=2$ for $h=3$ and $k=3$ for $h=4$ (\figurename~\ref{fig:no_k2}).
%On the other hand, there exist no-instances for $k=2$ and $h=4$ (\figurename~\ref{fig:no_h4_k2}) and $k=2$ and $h=3$ (\figurename~\ref{fig:no_h3_k2}).
We constructed them by hand and verified that they are no-instances by using a computer program.
It is interesting to give the boundary of $k$ for a given $h$, especially, $h=3$, that allows us to sort any input.
Does $k$ depend on both $h$ and $n$, or is it independent from $n$?

%\begin{figure}\centering
%\includegraphics[width=0.8\linewidth]{figure/log-new.jpg}
%\includegraphics[width=0.8\linewidth]{figure/log-new.pdf}
%\caption{A commercial product of the ball sort puzzle.}
%\label{fig:log-new}
%\end{figure}

Recently, a Japanese puzzle maker produces a commercial product of the ball sort puzzle.% %(\figurename~\ref{fig:log-new})
\footnote{\url{https://www.hanayamatoys.co.jp/product/products/product-cat/puzzle/}}
They extend the ball sort puzzle by introducing three different aspects as follows.
(1) It contains three different capacities of bins. That is, it allows to use different size of bins.
(2) The balls are not only colored, but also numbered. In some problems, the puzzle asks us to 
not only arrange the balls of the same color into a bin, but also they should be in order in each bin.
In some other problems, the puzzle asks us to arrange the balls 
so that the sums of the numbers in every non-empty bins are equal. 
(3) It contains transparent balls. That is, these transparent balls are just used for arranging balls,
and their final positions do not matter in the final state.
Those extensions seems to be reasonable future work.

\bibliography{main}
\end{document}